\newtheorem{theorem}{Theorem}[section]
\newtheorem{lemma}{Lemma}[section]
\newtheorem{definition}{Definition}[section]
\newtheorem{remark}{Remark}[section]
\def \N{\mathbb{N}}
\def \R{\mathbb{R}}
\DeclareMathOperator*{\argmax}{argmax}
\newcommand{\essinf}{\operatorname*{ess\,inf}}
\DeclareMathOperator{\Cov}{Cov}
\numberwithin{equation}{section}
\title{Mean-Variance Stackelberg Games with Asymmetric Information}
\author{
    Yu-Jui Huang\thanks{Department of Applied Mathematics, University of Colorado, Boulder, CO 80309-0526, USA, email: \texttt{yujui.huang@colorado.edu}. Partially supported by National Science Foundation (DMS-2109002).}
    \and
   Shihao Zhu\thanks{Institute of Insurance Science, Ulm University, Helmholtzstr.\ 20, 89069 Ulm, Germany, email: \texttt{shihao.zhu@uni-ulm.de}.}
}
\date{\today}
\begin{document}

\maketitle

%\tableofcontents

\begin{abstract}
This paper considers two investors who perform mean-variance portfolio selection with asymmetric information: one knows the true stock dynamics, while the other has to infer the true dynamics from observed stock evolution. Their portfolio selection is interconnected through relative performance concerns, i.e., each investor is concerned about not only her terminal wealth, but how it compares to the average terminal wealth of both investors. We model this as Stackelberg competition: the partially-informed investor (the ``follower'') observes the trading behavior of the fully-informed investor (the ``leader'') and decides her trading strategy accordingly; the leader, anticipating the follower's response, in turn selects a trading strategy that best suits her objective. To prevent information leakage, the leader adopts a randomized strategy selected under an entropy-regularized mean-variance objective, where the entropy regularizer quantifies the randomness of a chosen strategy. %the stronger the randomness, the less likely the follower can infer the stock dynamics from the strategy. 
The follower, on the other hand,  observes only the actual trading actions of the leader (sampled from the randomized strategy), but not the randomized strategy itself. Her mean-variance objective is thus a random field, in the form of an expectation conditioned on a realized path of the leader's trading actions. In the idealized case of continuous sampling of the leader's trading actions, we derive a Stackelberg equilibrium where the follower's trading strategy depends linearly on the actual trading actions of the leader and the leader samples her trading actions from Gaussian distributions. In the realistic case of discrete sampling of the leader's trading actions, the above becomes an $\epsilon$-Stackelberg equilibrium. 
%studies a two-player Stackelberg game of mean-variance portfolio selection under relative performance concerns in an asymmetric information setting. The leader has direct access to the expected return of the risky asset, whereas the follower only observes the asset price. Both investors aim to optimize their terminal wealth as well as their relative performance. At the \textit{inter-personal} level, the follower chooses a trading strategy in response to the leader's actions, while the leader anticipates this best response when selecting her own strategy. To preserve her informational advantage, the leader employs randomized strategies, which additionally must satisfy \textit{intra-personal} equilibrium conditions arising from the time-inconsistent nature of the mean-variance objective. We establish the existence of an \(\epsilon\)-Stackelberg equilibrium, defined as the pair consisting of the leader's equilibrium policy and the follower's corresponding best response, and derive this equilibrium in semi-explicit form. In particular, the leader's equilibrium strategy is shown to follow a Gaussian distribution.
%To formulate the leader's problem, we employ an exploratory framework, under which the intra-personal  We establish the notion of an \(\epsilon\)-Stackelberg equilibrium, defined by the follower's best-response strategy and the leader's optimal policy, such that both investors achieve intra-personal equilibrium. The resulting Stackelberg equilibrium is derived in semi-explicit form.
 \end{abstract}

\vspace{2mm}
\noindent{\bf MSC (2020):}\ 91G15, 91A65, 93E11.

\vspace{2mm}
\noindent{\bf Keywords:}\ asymmetric information, mean-variance portfolio selection, Stackelberg game, randomized strategy, intra-personal equilibrium.
\vspace{2mm}

%\noindent{\bf JEL Classification:}\ G11, E21, I13.
%\vspace{2mm}

\section{Introduction}
Investors' trading strategies can be intertwined. In making their trading decisions, since investors may possess different levels of information on the stock dynamics, the less-informed ones are tempted to learn from the observed trading of the well-informed ones, who may in turn trade cautiously to prevent information leakage. When evaluating their trading decisions, investors not only consider their investment performance per se, but commonly compare it with the performance of others. This paper aims to investigate how these two factors, {\it asymmetric information} and {\it relative evaluation}, jointly affect investors' trading decisions.

We consider two investors who trade a stock $S$ on a finite time horizon $T > 0$. For concreteness, we let the expected return of $S$ to be a constant $\mu\in\R$, which is known to the first investor. The second investor does not know $\mu$ precisely, except that it has two possible values $\mu_1,\mu_2 \in \mathbb R$ (with $\mu_1>\mu_2$). With this partial information, the second investor can infer the true dynamics of $S$ using the posterior probability $P(t)$ of $\mu=\mu_1$ conditioned on the observed evolution of $S$ up to the current time $t$, whose dynamics can be explicitly characterized by the nonlinear filtering theory. The two investors' portfolio selection problems, stated under the true dynamics and the inferred dynamics, respectively, are linked through {\it relative performance concerns}. That is, investor $i$ (for $i=1,2$) is concerned about not only her terminal wealth $X_i(T)$, but also how it compares to the average wealth of both investors $\overline X(T):= \left(X_1(T)+X_2(T)\right)/2$, thereby considering the mixed performance $\mathcal P_i(T) := (1-\lambda_i)X_i(T) + \lambda_i (X_i(T)-\overline X(T))$ %= X_i(T)-\lambda_i \overline X(T)$ 
for some $\lambda_i\in [0,1)$. 
We further assume that investor $i$ chooses a trading strategy under a mean-variance objective for $\mathcal P_i(T)$.  
%with different levels of information: the dynamics of a stock $S$ is known {\it fully} to the first investor, but {\it partially} to the second. The second investor can only manage to infer the true dynamics from observations of the stock evolution. 

The way we integrate relative performance is in line with \citet{espinosa2015optimal}, \cite{lacker2019mean}, and  \cite{huang2023partial}. Under the paradigm of expected utility or mean-variance optimization (for the mixed performance $\mathcal P_i(T)$), these studies derive a Nash equilibrium of trading strategies for $N\in\N$ investors, on the premise that all investors have the same level of information---all fully-informed in the first two studies; all partially-informed in the third. This paper extends the above to the case of asymmetric information among investors. 

Notably, we model the asymmetry of information differently from prior studies. \cite{cardaliaguet2007differential} and \cite{cardaliaguet2009stochastic} consider two-player zero-sum differential games where the players have different knowledge of the terminal payoff function; namely, %Their focus is thus very different from ours: 
they model the asymmetry of information on a payoff function, but not on the dynamics of an observable process. While insider trading models in \cite{pikovsky1996anticipative}, \cite{amendinger1998additional}, and \cite{corcuera2004additional}, among others, consider asymmetric information on the stock dynamics, their economic motivation and the resulting mathematical setup differ from ours. These models specify the stock dynamics under the filtration of an average investor, while assuming that an insider has additional information---usually the precise stock price (or a functional of it) at a future date, possibly perturbed by noise---such that the insider obtains a privileged stock dynamics by filtration enlargement. By contrast, the insider in our model (i.e., the fully-informed investor) does not know any future stock price, but rather the precise stock dynamics. Indeed, relying on exclusive research reports and economic datasets, a professional fund manager can estimate the stock dynamics more accurately (but not necessarily foresee future prices) than an average investor, who extracts information primarily from public data of historical prices. Hence, we specify the (true) stock dynamics under the larger filtration of the fully-informed investor and recover the dynamics for the partially-informed investor, adapted to the smaller filtration generated by only the stock evolution, using the nonlinear filtering theory.  Note that \cite{guasoni2006asymmetric} models asymmetric information in a similar spirit, although the stock dynamics for the uninformed agent therein is recovered by the Hitsuda representation of Gaussian processes.

Let us also stress that in the insider trading models, the insider and average investor solve their optimal investment problems individually (without interacting with each other), and the focus therein is to find the ``value of additional information'', i.e., the extra utility the insider can obtain. In our case, as investors of different information are connected through their relative performance concerns, strategic interactions must ensue. 

In this paper, we aim to elucidate the involved interactions through a Stackelberg game: the fully-informed investor (the ``leader'') chooses her trading strategy first, and the partially-informed investor (the ``follower'') decides his strategy in response to it; the leader, anticipating the follower's response, then selects a strategy that best suits her objective. This leader-follower setup conforms to the intuition that the partially-informed may wish to learn from the observed trading of the fully-informed, while the fully-informed knows this and will react accordingly.    

In particular, we let the fully-informed investor adopt randomized strategies, i.e., she samples trading actions from a probability distribution. 
This is inspired by price formation models in \cite{back2004information} and \cite{han2023strategic}, differential games in \cite{cardaliaguet2007differential} and \cite{cardaliaguet2009stochastic}, and Dynkin games in \cite{de2022dynkin}, where better-informed agents randomize their strategies to alleviate information leakage. That is, the fully-informed investor now has two (possibly competing) intents---the original mean-variance objective and safeguarding her privileged information. To effectively manage the dual intents, we add to the mean-variance objective an entropy regularizer, which quantifies the randomness of a chosen strategy. This formulation is borrowed from the recent stochastic control framework of reinforcement learning (see e.g.,\ \cite{wang2020reinforcement, wang2020continuous, dai2023learning}), but the interpretation is different: there, the entropy represents the degree of exploration in reinforcement learning; here, it reflects how strongly privileged information is guarded (via randomization). 

To derive a Stackelberg equilibrium, we begin with the follower's problem. Upon observing the fully-informed investor's actual trading actions (sampled from a randomized strategy), the partially-informed investor attempts to solve his mean-variance problem. Importantly, he observes only the leader's (sampled) trading actions, but not the underlying randomized strategy. This, on one hand, hinders the follower's inference of the true $\mu$ from observed trading of the leader. On the other hand, because he is unaware of the distributions that generate the leader's actions, the follower can compute his mean-variance objective (which involves the leader's future actions under relative performance concerns) only when it is conditioned on a given realized path of the leader's actions. The resulting mean-variance objective is then a random field, instead of a deterministic function of the current time and state, that depends on the realizations of the leader's actual trading. This is reminiscent of stochastic control problems in \citet{buckdahn2007pathwise}, which depend on the paths of exogenous noise (or information) and are formed as random fields. As the mean-variance objective induces time inconsistency, the follower's goal is to find an intra-personal equilibrium (among his current and future selves), given a path of the leader's actions. This is achieved by solving a {\it pathwise} extended Hamilton-Jacobi-Bellman (HJB) system, which synthesizes the standard (deterministic) extended HJB system in \cite{bjork2017time} for time-inconsistent problems and the {\it stochastic} HJB equation in \citet{buckdahn2007pathwise}.

We now turn to the leader's problem. 
%Anticipating the follower's response to her realized trading actions, the leader now chooses a suitable randomized strategy to sample trading actions over time. 
Again, due to time inconsistency under her mean-variance objective (which readily encodes the follower's response to her trading actions), the leader's goal is to find an intra-personal equilibrium (among her current and future selves), which is a randomized strategy that will be used consistently over time to sample trading actions. We approach this problem first when the wealth processes $(X_1,X_2)$ are taken to be their {\it exploratory} versions. Exploratory versions of controlled stochastic processes, introduced in \cite{wang2020reinforcement} and analyzed in detail by \cite{dai2023learning}, idealize away practical sampling of control actions, capturing directly the average effect of a randomized strategy on the dynamics. Intuitively, they are the idealized dynamics if control actions can be sampled continuously over time. Under the exploratory dynamics of $(X_1,X_2)$, which facilitates a more transparent analysis, we derive an intra-personal equilibrium for the leader. To recover the actual sampled dynamics of $(X_1,X_2)$, we rely on the approximation result in \cite{jia2025accuracy}: the value function under the exploratory dynamics can be closely approximated by that under the actual sampled dynamics, as long as random samplings of control actions are made frequently enough. % on a time grid that is fine enough. 
Hence, by performing random sampling (following the randomized strategy derived above) on a time grid that is fine enough, the fully-informed investor obtains an $\epsilon$-intra-personal equilibrium under the actual sampled dynamics of $(X_1,X_2)$, where $\epsilon>0$ stems from the approximation error in \cite{jia2025accuracy}. This, along with the corresponding intra-personal equilibrium of the partially-informed investor, forms an $\epsilon$-Stackelberg equilibrium. 

Our analysis leads to several interesting findings. First, it is somewhat surprising that while the follower's equilibrium strategy depends on the leader's randomly sampled trading actions $u_1$, his equilibrium value function is deterministic, independently of the realized path of $u_1$; see \eqref{eq3-2} and \eqref{equivalue2}. This fundamentally results from the linearity of the wealth dynamics in the investors' portfolios \(u_1\) and \(u_2\). Such linearity allows us to rewrite the follower's equilibrium value function, defined as a random field, in terms of a new portfolio $u^*$, which is a linear combination of $u_1$ and $u_2$. As $u^*$ is shown to be deterministic (see \eqref{eq-3.9}), the value function becomes deterministic, and we additionally find that $u_2$ changes randomly only to cancel out the randomness from $u_1$ (for $u^*$ to stay deterministic). While there exist other studies where value functions are defined as random fields (see e.g.,\ \cite{buckdahn2001stochastic1, buckdahn2001stochastic2, buckdahn2007pathwise} %on stochastic viscosity solutions for nonlinear stochastic partial differential equations, 
and \cite{graewe2015non}),  %on optimal portfolio liquidation with prescribed terminal states), 
the reduction of the random fields to deterministic functions, as in our case, appears to be new.

Second, the leader's equilibrium strategy follows a Gaussian distribution, whose mean depends on the current time $t$ and $p=P(t)$, the probability of $\mu=\mu_1$ given the evolution of $S$ up to time $t$; see \eqref{eq-optimal1}. The dependence on $p$ may seem counterintuitive at first glance---after all, as the leader knows $\mu$ precisely, there is no need to estimate $\mu$ by evaluating the posterior probability $p$. In fact, it is the structure of Stackelberg competition that results in the presence of $p$.
%the leader's anticipation of the follower's response that results in the presence of $p$. % results exactly fromin a Stackelberg setting. 
As the follower's equilibrium strategy involves filtering the value of $\mu$ using $p=P(t)$, when the leader takes this into account in her own problem solving, she naturally needs to keep track of $p$ and react to it. 

Finally, the leader's randomization of actions using Gaussian distributions is closely related to a thread of recent studies. In the aforementioned stochastic control framework of reinforcement learning, \cite{wang2020reinforcement} show that the optimal strategy for an entropy-regularized linear-quadratic problem follows Gaussian distributions; \cite{wang2020continuous} study an entropy-regularized mean-variance portfolio selection problem and the optimal (pre-committed) strategy is again shown to be Gaussian; \cite{dai2023learning} investigate an entropy-regularized log-return mean-variance portfolio selection problem and derive an equilibrium strategy that again follows Gaussian distributions. Note that all these studies tackle stochastic control problems of one single agent. Our result shows that Gaussian randomization remains ideal {\it even} in a two-player Stackelberg game of mean-variance portfolio selection, where the leader employs randomized strategies. Despite this mathematical extension, we stress that Gaussian randomization serves a different purpose in our case: it is used to preserve the leader's informational advantage, rather than encourage exploration in reinforcement learning under the prior studies.

The remainder of the paper is organized as follows. Section~\ref{setup} presents the model set-up, formulates the Stackelberg game, and introduces the randomized strategy together with the sampled dynamics. Section~\ref{follower} analyzes the follower's optimization problem and derives his intra-personal equilibrium. Section~\ref{leadersec} develops the exploratory framework, studies the leader's randomized optimization problem, and characterizes the \(\epsilon\)-Stackelberg equilibrium. Section~\ref{conclusion} concludes.

\section{The Setup}\label{setup}
Let $(\Omega,\mathcal{F},\mathbb{P})$ be a probability space equipped with a filtration $\mathbb{F}:= \{\mathcal{F}_t\}_{t\geq0}$ satisfying the usual conditions of completeness and right-continuity. Suppose that a standard Brownian motion $W$ and a random variable $\mu: \Omega\to \mathbb R$  exists in the space.  Consider a financial market with a risk-free rate $r>0$ and a stock price process $S:=\{S(t)\}_{t\geq 0}$ given by 
\begin{align}\label{eq2.1}
    dS(t) &= \mu S(t) dt + \sigma S(t) dW(t), \quad S(0)=s>0,
   \end{align}
where $\sigma>0$ is a  given  constant. Let $\mathbb{F}^S:=\{\mathcal F_t^S\}_{t\geq 0}$ (resp. $\mathbb{F}^{\mu,S}:=\{\mathcal{F}_t^{\mu,S} \}_{t\geq 0}$) be the natural filtration generated by $S$ (resp. by both $\mu$ and $S$).

Given a fixed time horizon $T>0$, suppose that there are two investors trading the stock $S$. We assume that the first investor (the \textit{informed} player) has access to $\mathbb{F}^{\mu,S}$. The dynamics of $S$ in (\ref{eq2.1}), including $\mu \in \mathbb R$ and $dW(t)$ is then fully known; note that $W$ is $\mathbb{F}^{\mu,S}$-adapted in view of (\ref{eq2.1}). However,  the second investor (the \textit{uninformed} player) has access to only $\mathbb{F}^S$, i.e., he observes the evolution of $S$ but do not know $\mu\in \mathbb{R}$ and $dW(t)$. We assume that there are two possible values $\mu_1$ and $\mu_2$ (with $\mu_1>\mu_2$) for the expected return $\mu\in \mathbb{R}$ and the uninformed player does not know which one is the true value. 

Moreover, we introduce a hierarchical structure in the financial market and formulate a stochastic Stackelberg (leader-follower) game. The first investor, endowed with full information and acting as the leader (she), announces her strategy first. The second investor, with only partial information and acting as the follower (he), subsequently adjusts his strategy optimally in response. To determine her optimal policy, the leader must anticipate the follower's reaction to any given strategy and then select the one that maximizes her objective given the follower's best response. Thus, a Stackelberg equilibrium is defined by the combination of the leader's optimal action and the follower's optimal response to that action.

Denote by $X_i:=\{X_i(t)\}_{t\geq 0},i=1,2,$ the discounted\footnote{From the self-financing strategy, the (undiscounted) wealth process of the leader evolves as  
\begin{align*}
	dX^{un}_1(t) = \big[rX^{un}_1(t) + u^{un}_1(t)(\mu-r)\big]\,dt 
	+ \sigma u^{un}_1(t)\,dW(t), 
	\quad X^{un}_1(0)=x_1 \in \mathbb{R}.
\end{align*}
Define the discounted wealth process by $X_1(t) := e^{r(T-t)} X^{un}_1(t),$
and the corresponding discounted strategy by $u_1(t) := e^{r(T-t)} u^{un}_1(t).$
Clearly, $(X^{un}_1,u^{un}_1)$ and $(X_1,u_1)$ are in one-to-one correspondence.  
The use of discounted wealth and strategy processes is convenient for the subsequent analysis of the exploratory wealth dynamics; see, e.g., the discounted formulation in equation (2) of \cite{wang2020continuous}.
  } wealth process of the $i$-th investor who rebalances her (his) portfolio investing in the risky and risk-less assets with a strategy $u_i=\{u_i(t)\}_{t\geq 0}$. Here, $u_i(t)$ is the discounted dollar amount put in the risky asset at time $t$, while satisfying the standard  self-financing assumption. 
%For each $i=1,2,$ the $i$-th investors decides the amount of wealth $\pi_i(t)$ to invest in $S$ at every time $t\in[0,T]$, based on her current information (either $\mathcal{F}_t^{\mu,S}$ or $\mathcal{F}_t^S$). 

Therefore, the discounted wealth process $X_1$ of the leader satisfies 
\begin{align}\label{2-1}
	dX_1(t) =  u_1(t)(\mu-r) dt + \sigma u_1(t) dW(t), \quad X_1(0)=x_1 \in \mathbb{R}.
\end{align}
As the true value of $\mu$ is unknown for the follower, we consider, for any time $t \geq 0$,  the posterior probability
\begin{align*}
	\mathfrak{p}_1(t):=\mathbb{P}(\mu=\mu_1|\mathcal{F}^S_t). 
	\end{align*}
From Lemma 3.2 in \cite{huang2023partial}, we show that $\mathfrak p_1(\cdot)$ can be characterized as the unique strong solution to 
\begin{align}\label{eq2.4}
 		dP(t)=\frac{\mu_1-\mu_2}{\sigma}P(t)(1-P(t))d\widehat{W}(t), \quad t\geq 0, \quad P(0)=\mathfrak p_1(0) =p \in (0,1),
 	\end{align} 
 	with $\mathfrak p_1(t) \in (0,1)$ for all $t\geq 0$ a.s., where \begin{align}\label{eq2.2}
 		 \widehat{W}(t):= \frac{1}{\sigma}\bigg[ \log\bigg(\frac{S(t)}{S(0)}\bigg)- \bigg(\mu_2-\frac{\sigma^2}{2}\bigg)t-(\mu_1-\mu_2) \int^t_0 P(s)  ds \bigg], \ t\geq 0,	\end{align}
 		is a standard Brownian motion w.r.t the filtration $\{\mathcal{F}^S_t\}_{t\geq 0}$.

 	   Hence, $S$ in (\ref{eq2.1}) can be expressed equivalently as 
 	\begin{align}\label{eq2.5}
 		dS(t)= \big((\mu_1-\mu_2) P(t) +\mu_2 \big) S(t) dt + \sigma S(t) d\widehat{W}(t), \quad t\geq 0, \ S(0)=s>0,
 	\end{align}
 	where $P$ is the unique strong solution to (\ref{eq2.4}). Now, $S$ in (\ref{eq2.1}), which involves the unknown $\mu$ for the follower, is now expressed alternatively in terms of the known constants $\mu_1,\mu_2,\sigma$ and the observable process $P(\cdot)$. When the follower views the stock $S$ as (\ref{eq2.5}), his (discounted) wealth process can also be expressed in terms of $P$ in (\ref{eq2.4}) and $\widehat{W}$ in (\ref{eq2.2}), such that the dynamics of wealth process  is fully observable. Therefore, the discounted wealth process $X_2$ of the follower can be equivalently expressed as 
\begin{align}\label{2-2}
	dX_2(t) = [ u_2(t)((\mu_1-\mu_2)P(t)+\mu_2-r)] dt + \sigma u_2(t) d\widehat{W}(t),  \ X_2(0)=x_2 \in \mathbb{R}, 	%&=[rX_2(t)+ \pi_2(t)(\mu-r)] dt + \sigma \pi_2(t) dW_t, \quad \in \mathbb{R}.	
	\end{align}
	where $P$ is the unique solution to (\ref{eq2.4}).

Suppose that each investor considers the mean-variance portfolio selection problem under a relative performance criterion. Specifically, in line with \citep{espinosa2015optimal, lacker2019mean}, the $i$-th investor, for $i=1,2$, is concerned about not only the terminal (discounted) wealth $X_i(T)$ but also how it compares relatively to the average (discounted) wealth of both investors $\overline{X}(T):=\frac{1}{2}(X_1(T)+X_2(T))$.
Therefore, given the current time $t\in[0,T]$ and wealth levels $\boldsymbol{x}=(x_1,x_2)\in \mathbb{R}^2$ and $p\in(0,1)$, the $i$-th investor looks for a trading strategy $u_i $ that maximizes the mean-variance objective
\begin{align*}
	J_i(t,\boldsymbol{x},p):= \mathbb{E}[X_i(T)-\lambda_i\overline{X}(T)]-\frac{\gamma_i}{2} {\mathrm{Var}}[X_i(T)-\lambda_i\overline{X}(T)].
\end{align*}
Here, $\gamma_i>0, i =1,2$ is the risk aversion parameter for the $i$-th investor and $\lambda_i\in [0,1)$ is the weight for the relative component $X_i(T)-\overline{X}(T)$ assigned by investor $i$.

%superscript $``t,\boldsymbol{x},p"$ denotes conditioning on $\textbf{X}(0)=\boldsymbol{x}$ and $P(0)=p$.

%Given the current time $t\in[0,T]$ and wealth levels $\boldsymbol{x}=(x_1,x_2)\in \mathbb{R}^2$ and $p\in(0,1)$, the second investor looks for a trading strategy $\pi_2 $ that maximizes the mean-variance objective
%\begin{align}\label{2-4}
%	J_2(t,\boldsymbol{x},p):= \mathbb{E}[X_2(T)-\lambda_2\overline{X}(T)]-\frac{\gamma_2}{2} {\mathrm{Var}}[X_2(T)-\lambda_2\overline{X}(T)].
%\end{align}
%where the superscript $``t,\boldsymbol{x},p"$ denotes conditioning on $\textbf{X}(0)=\boldsymbol{x}$ and $P(0)=p$.
%\begin{align}\label{2-3}
%	J_1(t,\boldsymbol{x};\pi_1,\pi_2):= \mathbb{E}^{t,\boldsymbol{x}}[X_1(T)-\lambda_1^M\overline{X}(T)]-\frac{\gamma_1}{2} {\mathrm{Var}}^{t,\boldsymbol{x}}[X_1(T)-\lambda_1^V\overline{X}(T)],
%\end{align}
%where the superscript $``t,\boldsymbol{x}"$ denotes conditioning on $\textbf{X}(0)=\boldsymbol{x}$.
%Accordingly, the second investor looks for a trading strategy $\pi_2\in \mathcal{A}_2$ that maximizes the mean-variance objective

\subsection{Randomized Strategy}

 When considering games with asymmetric information, a crucial aspect is the strategic release of the additional knowledge from the more informed player (the leader) to the less informed one (the follower). Indeed, in contrast  with game with perfect information, the players here can no longer play pure (deterministic) strategies: at least the informed player has to introduce some randomness in the game in order to hide her private information.  This is modeled mathematically by allowing the trading strategy for the leader to be a randomized policy (see, e.g., \citep{cardaliaguet2007differential, cardaliaguet2009stochastic} for two-player zero-sum differential games and \citep{grun2013dynkin, de2022dynkin}  in the context of Dynkin games). 
 
 Specifically, the leader randomizes the action process $u_1(t)$ to obtain a probability density-valued policy process, denoted by $\Pi:=\{\Pi_t\}_{t\geq 0}$. At time $t$, the leader takes action $u_1(t)$ that is a random sample from the distribution $\Pi_t$. The policy depends on the current state $(t,X_1,X_2,P)$ and reflects the likelihood of each possible action the leader may take. 

 Let $\mathcal{O}:=\mathbb{R}^2\times [0,1]$ and $\mathcal Q:=[0,T]\times \mathbb{R}^2\times [0,1]$. Now we introduce the definition of an admissible  feedback policy $\Pi$ as follows.

\begin{definition}
Let $(x_1,x_2,p) \in \mathcal O $ be given and fixed. The portfolio $\Pi$ is called an admissible feedback strategy for $(x_1,x_2,p)$, and we write $\Pi\in \mathcal{A}_1$, if it satisfies the conditions: 
\begin{enumerate}[label=(\arabic*)]
	\item for each $t\in[0,T], \Pi_t \in \mathcal{P}(\mathbb{R})$ a.s., where $\mathcal{P}$ stands for all probability density functions on the real numbers. 
	\item $\Pi_t=\pi_1(t,X_1(t),X_2(t),P(t)),$ where $\pi_1(\cdot,\cdot,\cdot,\cdot)$ is a deterministic mapping from $\mathcal Q $ to $\mathcal{P}(\mathbb{R})$.

\item	$\Pi$ is progressively measurable with respect to $\mathbb{F}^{\mu,S}$ and $ \int^T_0 \int_{\mathbb{R}} |u_1(t)|^2 \Pi_t(u_1)du_1 dt<\infty$.	
\end{enumerate}

\end{definition}

\subsection{Sampled (discounted) wealth process}

When actions are sampled from a stochastic policy, it is practically infeasible for the leader to generate these independent samples continuously. Moreover, interacting with (\ref{2-1}) by continuously sampling from a stochastic policy creates measure-theoretical issues. As already  pointed out in Remark 2.1 of \cite{szpruch2024optimal}, it is impossible to construct a family of non-constant random variables $(\xi_t)_{t\in[0,1]}$ such that $(\xi_t)_{t\in[0,1]}$ is (essentially) pairwise independent and $t\mapsto \xi_t$ is Lebesgue measurable. This implies that if one controls (\ref{2-1}) by continuously generating independent actions, the resulting coefficients are not progressively measurable, rendering the conventional stochastic integral ill-defined; see \cite{bender2024grid} for more discussion. 

Therefore, both from theoretical and practical perspective, evaluating the performance of a stochastic policy $\Pi$ requires discretely sampling actions from the policy and applying them to (\ref{2-1}).
 %at a discrete time grid  and executed as a piecewise constant control process, the resulting state process, referred to as the \textit{sampled dynamics}. 
 Following the formulation in \cite{jia2025accuracy}, we next define the discounted wealth process $X_1(t)$ with random actions $u_1(t)$ sampled according to the stochastic policy $\Pi_t$.

\begin{definition}\label{def2.2}
	We say a tuple $(\Omega^\xi,\mathcal F^\xi,\mathbb P^\xi,\mathbb{R},\xi,\phi)$ a sampling procedure of the policy $\Pi$ if $(\Omega^\xi,\mathcal F^\xi,\mathbb P^\xi)$ is a complete probability space, $(\mathbb{R},\mathcal B(\mathbb{R}))$ is a Borel space, $\xi:\Omega^\xi \to \mathbb{R}$ is a random variable and $\phi:\mathcal Q \times \mathbb{R} \to \mathbb{R} $ is a measurable function such that for     all $(t,x_1,x_2,p)$ in $\mathcal Q$, $\Omega^\xi: \omega \mapsto \phi(t,x_1,x_2,p,\xi(\omega))\in \mathbb R$ has the distribution $\Pi$ under the measure $\mathbb P^\xi$.
\end{definition}

By Definition \ref{def2.2}, $(\mathbb R,\xi,\phi)$ provides a framework for executing the policy $\Pi$ by sampling a random action $u_1(t):=\phi(t,X_1(t),X_2(t),P(t),\xi)$ from the distribution $\Pi(du_1|t,X_1(t),X_2(t),P(t))$ at a given time $t\in[0,T]$ and states $(X_1(t),X_2(t),P(t))$.

Given a time grid $\mathcal D=\{0=t_0< \cdots < t_n=T\}$ of $[0,T]$ and define the mesh size of the grid by $|\mathcal D|=\max_{0\leq i\leq n-1}(t_{i+1}-t_i)$. Now, fix a sampling procedure $(\Omega^\xi,\mathcal F^\xi,\mathbb P^\xi,\mathbb{R},\xi,\phi)$ of $\Pi$, let $\mathbb N_0=\mathbb N \cup \{0\}$ and let $(\Omega^{\xi_n}, \mathcal F^{\xi_n},\mathbb P^{\xi_n},\xi_n )_{n\in \mathbb {N}_0}$ be independent copies of $(\Omega^{\xi}, \mathcal F^{\xi},\mathbb P^{\xi},\xi )$. Consider a probability space of the following form: 
\begin{align*}
	(\widetilde{\Omega},\widetilde{\mathcal F},\widetilde{\mathbb P}):=\bigg(\Omega \times \prod_{n=0}^\infty \Omega^{\xi_n }, \mathcal F \otimes \bigotimes_{n=0}^{\infty}\mathcal F^{\xi_n}, \mathbb P \otimes \bigotimes^{\infty}_{n=0}\mathbb P^{\xi_n}  \bigg),
\end{align*}
where $(\Omega,\mathcal F, \mathbb P)$ supports the Brownian motion $W$ and $\mu$, and for each $n\in \mathbb N_0, (\Omega^{\xi_n},\mathcal F^{\xi_n},\mathbb P^{\xi_n})$ supports the random variable $\xi_n$ used to generate the random control at the grid point $t_n$.

 We consider interacting with the state dynamic (\ref{2-1}) by sampling actions at the grid points in $\mathcal D$ according to the policy $\Pi$, referred to as the \textit{sampled dynamics}. More precisely, we consider the sampled wealth process $X_1^{\mathcal D}:=\{ X_1^{\mathcal D}(t)\}_{t\geq 0}$  such that for all $i=0,...,n-1$ and all $t \in [t_i,t_{i+1}],$
\begin{align}\label{eq2-13}
	X^{\mathcal D}_1(t) = X^{\mathcal D}_1(t_i)+ \int^t_{t_i} [u_1(t_i)(\mu-r)] ds + \int^t_{t_i} \sigma u_1(t_i) dW(s),  \quad X^{\mathcal D}_1(0)=x_1 \in \mathbb{R}.
	\end{align}
%with $u_1(t_i)=\phi(t_i,X^{\mathcal D}_1(t_i),X_2^{\mathcal D}(t_i),P(t_i),\xi_i)$, where $X_2^{\mathcal D}$ is defined below. 
For notational convenience, we write (\ref{eq2-13}) in the following equivalent form 
\begin{align}\label{eq2-14}
	dX_1^{\mathcal D}(t) = [u_1(\delta(t))(\mu-r)] dt + \sigma u_1(\delta(t)) dW(t), \quad X^{\mathcal D}_1(0)=x_1 \in \mathbb{R}
	\end{align}
where $\delta(t):=t_i$ for $t\in [t_i,t_{i+1})$. The dynamics (\ref{eq2-14}) can be viewed as a stochastic differential equation with random coefficients. 

Note that the sampled dynamics evolve continuously over time while the control process remains constant within each subinterval. In particular, a random action $u_1(t_i)$, is generated at $t_i$ and applied to the system over the interval $[t_i,t_{i+1})$ before being updated to the next action $u_1(t_{i+1})$. Moreover, Lemma 3.1 in \cite{jia2025accuracy} shows that the sampled dynamics (\ref{eq2-14}) admits a unique strong solution $X^{\mathcal D}_1$ which is adapted to the filtration generated by both the Brownian motion $W$ and the execution noise $\xi$.

For notational convenience, define $\theta,\beta:[0,1] \to \mathbb R$ by 
\begin{align*}
	\theta(p):=(\mu_1-\mu_2)p+\mu_2 \quad \text{and}\quad \beta(p):=\frac{\mu_1-\mu_2}{\sigma}p(1-p).
\end{align*}
 Analogously, we consider the sampled wealth process $X_2^{\mathcal D}:=\{ X_2^{\mathcal D}(t)\}_{t\geq 0}$  such that for all $i=0,...,n-1$ and all $t \in [t_i,t_{i+1}],$
\begin{align}\label{eq2-13-1}
	X^{\mathcal D}_2(t) = X^{\mathcal D}_2(t_i)+ \int^t_{t_i} [u_2(s)(\theta(P(s)) -r)] ds + \int^t_{t_i} \sigma u_2(s) d \widehat{ W}(s),  \quad X^{\mathcal D}_2(0)=x_2 \in \mathbb{R}
	\end{align}
with $u_2(s)=u_2(s,X^{\mathcal D}_1(s),X_2^{\mathcal D}(s),P(s),u_1(t_i))$. Then we write (\ref{eq2-13-1}) in the following equivalent form 
\begin{align}\label{eq2-14-1}
	dX_2^{\mathcal D}(t) = [u_2(t)(\theta(P(t)) -r)] dt + \sigma u_2(t) d \widehat{ W}(t), \quad X^{\mathcal D}_2(0)=x_2 \in \mathbb{R}.
	\end{align}

Here, it is worth noting that we require that the follower adopts a deterministic strategy rather than sampling from a randomized policy. Consequently, the wealth process in (\ref{eq2-14-1}) coincides with that in (\ref{2-2}). We introduce the sampled dynamics of the follower here only for consistency with the leader's formulation, as the follower's decision making also takes into account the leader's wealth dynamics through the relative performance evaluation.

\section{The follower's optimization problem}\label{follower}	
In the Stackelberg framework, the follower makes her decision subsequent to the leader's action and conditional on the observed choice of the leader. Consequently, the follower's strategy is characterized as a best response to any given leader's policy. To formalize the leader's problem, it is therefore necessary to first solve the follower's optimization and derive the corresponding best response function. Therefore we first consider the optimization problem of the follower.

 Given a time grid $\mathcal D=\{0=t_0< \cdots < t_n=T\}$ of $[0,T]$ and we define the filtration 
\begin{align*}
	\mathbb{G}=\{\mathcal G_t\}_{t\geq 0}\overset{\triangle}{=}\{\mathcal{F}_t^S \otimes \mathcal F_T^{\xi} \},
	\end{align*}
	where $\mathcal F_T^\xi:= \otimes_{n=0}^{\infty} \mathcal F^{\xi_n} $ representing all information of sampling actions of the leader until time $T$.
Now we introduce the admissible strategy of the follower as follows. 

\begin{definition}
 Let $(x_1,x_2,p) \in \mathcal{O} $ be given and fixed. The portfolio $u_2$ is called an admissible portfolio for $(x_1,x_2,p)$, and we write $u_2\in \mathcal{A}_2$, if it satisfies the condition:  $u_2 \in \mathbb{R}$ is progressively measurable with respect to $\mathbb{G}$ and $\int^T_0|u_2(t)|^2dt<\infty \ {\mathbb{P}}$-a.s.		
\end{definition}

Observing the actions $u_1(t)$ at time $t$, the follower makes decisions based on the observed sampled dynamics (cf.\ (\ref{eq2-14}), (\ref{eq2-14-1})) and filtering equation (cf.\ (\ref{eq2.4})), i.e.,
\begin{align}\label{eq3.1}
\begin{cases}
	dX_1^{\mathcal D}(t) &= [u_1(\delta(t))(\theta(P(t))-r)] dt + \sigma u_1(\delta(t)) d\widehat{W}(t),  \ X_1^{\mathcal D}(0)=x_1,\\	dX_2^{\mathcal D}(t) &= [u_2(t)(\theta(P(t))-r)] dt + \sigma u_2(t) d\widehat{W}(t),  \ X_2^{\mathcal D}(0)=x_2,\\
dP_t&=\beta(P_t)d \widehat{W}(t), \ P(0)=p.\end{cases}
			\end{align}
Thus, the follower looks for a trading strategy $u_2 \in \mathcal{A}_2$ that maximizes the mean-variance objective
\begin{align}\label{2-4}
	J_2^{\mathcal D}(t,\boldsymbol{x},p;u_2,u_1)= \mathbb{E}[X_2^{\mathcal D}(T)-\lambda_2\overline{X}^{\mathcal D}(T)|\mathcal G_t ]-\frac{\gamma_2}{2} {\mathrm{Var}}[X_2^{\mathcal D}(T)-\lambda_2\overline{X}^{\mathcal D}(T)|\mathcal G_t],
\end{align}
where $\overline{X}^{\mathcal D}:=\frac{1}{2}(X_1^{\mathcal D}+X_2^{\mathcal D})$. Therefore,   $J_2^{\mathcal D}$ is a $\mathcal G_t$-measurable random field. 
\begin{remark}
	In (\ref{2-4}), the conditional expectation is taken with respect to the filtration 
\(\mathcal{F}_T^{\xi} \), 
that is, the information generated by the sampled actions \(u_1\) up to time \(T\). 
The follower has access only to the realized actions \(u_1\), but not to their underlying distribution. 
Otherwise, the follower would not be able to evaluate the expectation and variance in (\ref{2-4}).
\end{remark}

\begin{remark}
	Here, the follower relies solely on the stock price information to estimate the stock return, denoted by \(P(t)\) in (\ref{eq3.1}). 
Although the follower also observes the leader's sampled actions as the game evolves, in principle allowing for inference of the underlying distribution, this is practically infeasible since the distribution itself evolves over time. Therefore, we restrict attention to the optimal estimation process \(P(t)\), without incorporating the additional information contained in the sampled actions \(u_1(\delta(t))\).

\end{remark}

\subsection{The follower's equilibrium strategy}

Our aim is to find a Stackelberg equilibrium $(\Pi^{1*},u_2^*)$ with $\Pi^{1*} \in \mathcal{A}_1, u_2^* \in \mathcal A_2 $ for this two-player Stackelberg differential game. Because a mean-variance objective is known to induce time inconsistency, how an equilibrium should be defined requires a deeper thought. As elaborated in \cite{huang2022time} and \cite{huang2023partial}, in a dynamic game where players have time-inconsistent preferences, there are two intertwined levels of game-theoretic reasoning. At the inter-personal level---unlike \cite{huang2023partial}, which considers a simultaneous-move Nash equilibrium---we model the interaction as a Stackelberg game: the leader first commits to a strategy, and the follower then optimally adjusts his action in response. The selected action, importantly, has to be an equilibrium at the intra-personal level (i.e., among the player's current and future selves), so as to resolve time inconsistency psychologically within the player. 

We now introduce the intra-personal equilibrium of the follower. 
\begin{definition}[Follower's Intra-personal equilibrium $u_2^*$]\label{deffollower}
	For any $t\in[0,T]$ and initial point $(t,x_1,x_2,p)$, we define \begin{align*}
	u_2^{h,v_2}(s)= 	\begin{cases}
			v_2(s), \quad \text{for} \ t\leq s \leq t+h,\\
			u_2(s), \quad \text{for} \ t+h \leq s\leq T,
		\end{cases}
	\end{align*} 
	with a fixed real number $h>0$ and a fixed $v_2 \in \mathcal{A}_2 $.	
	%$\pi_2$ is said to be equilibrium response of investor 2 if at any time $t\in [0,T]$ and for a fixed $h$ and for any initial state $(x_1,x_2,p)$, a one-off perturbation $\pi_2^{h,v_2}$ defined by 

Given a time grid $\mathcal D$ and  $u_1(\delta( t))$ is the random action sampled from the distribution $\Pi_t$ and  if		
\begin{align}\label{eq3-3}
		 \essinf_{h \downarrow 0 } \frac{J_2^{\mathcal D}(t,\boldsymbol{x},p;u^*_{2},u_1 )- J_2^{\mathcal D}(t,\boldsymbol{x},p;u_{2}^{h,v_2}, u_1)}{h}\geq 0,
	\end{align}
	for all $v_2 \in \mathcal{A}_2 $, we say that $u^*_2$ is an intra-personal equilibrium of follower. 
	\end{definition}
The equilibrium response $u_2^*$ of follower can be viewed as mapping of $u_1$. Furthermore, the equilibrium response value function of the follower is defined as
\begin{align}\label{eq2-16}
	V_2(t,x_1,x_2,p):=J_2^{\mathcal D}(t,x_1,x_2,p;u^*_2,u_1).
\end{align} 
%and the corresponding auxiliary value function is defined as 
%\begin{align}\label{eq2-17}
%	g_2(t,x_1,x_2,p):= \mathbb{E}[(1-\frac{\lambda_2}{2}) X^{\mathcal D}_2(T)-\frac{\lambda_2}{2}X^{\mathcal D}_1(T))|\mathcal G_t].
%\end{align}

\begin{remark}
Here the essential infimum in (\ref{eq3-3}) should be understood as one with respect to the indexed family of random variables (see, e.g., Appendix A in \cite{karatzas1998methods}). We recast it in Appendix  \ref{essential} for ready reference.
\end{remark}

%\subsection{Equilibrium Strategy }

We now characterize precisely the intra-personal equilibrium that satisfies condition (\ref{eq3-3}) and the corresponding equilibrium response value function $V_2$ in (\ref{eq2-16}). Before proceeding, we introduce an equivalent and more convenient formulation, whose advantages will become clear below.%For this purpose, we introduce the extended HJB system of equations, which provides the characterization of the value function \(V_2\) together with the corresponding equilibrium strategy, as follows.

Let ${ Z}_2(t)= (1-\frac{\lambda_2}{2}){ X}^{\mathcal D}_2(t)-\frac{\lambda_2}{2}{ X}_1^{\mathcal D}(t)$ be the wealth difference of the two investors. From (\ref{eq3.1}) we have that ${Z}_2$ follows the dynamic 
		\begin{align}\label{eq3-6}
			d{ Z}_2(t)= [u(t)(\theta(P(t))-r)] dt + \sigma u(t) d\widehat{W}(t), 
							\end{align}
							with ${ Z}_2(0)=z_2:=(1-\frac{\lambda_2}{2})x_2-\frac{\lambda_2}{2}x_1	$   and $u(t):=(1-\frac{\lambda_2}{2})u_2(t)-\frac{\lambda_2}{2}u_1(\delta(t)) $. Accordingly, $u^*(t):=(1-\frac{\lambda_2}{2})u_2^*(t)-\frac{\lambda_2}{2}u_1(\delta(t)) $	. 					
							Then, we can rewrite (\ref{2-4}) as 
							\begin{align*}
								J_2^{\mathcal D}(t,z_2,p;u)= \mathbb{E}[Z_2(T)|\mathcal G_t ]-\frac{\gamma_2}{2} {\mathrm{Var}}[Z_2(T)|\mathcal G_t].
							\end{align*}
		
	Moreover, the equilibrium response value function of the follower is redefined as
\begin{align}\label{eq2-16-1-0}
	V_2(t,z_2,p):=J_2^{\mathcal D}(t,z_2,p;u^*).
\end{align} 
and the corresponding auxiliary value function is redefined as 
\begin{align}\label{eq2-17-0}
	g_2(t,z_2,p):= \mathbb{E}[Z_2^{u^*}(T)|\mathcal G_t].
\end{align}
		For the wealth dynamics (\ref{eq3-6}) with $P(\cdot)$  in (\ref{eq3.1}),  the ``random" variational operator $\mathcal{A}_2$ is defined by 
					\begin{align*}
		\mathcal{A}_2 f_2(t,z_2 ,p):= [u(\theta(p)-r)] \partial_{z_2} f_2+\frac{1}{2}\sigma^2 u^2\partial_{z_2z_2} f_2 +\frac{1}{2}\beta^2(p) \partial_{pp} f_2 +\sigma \beta(p) u \partial_{z_1p} f_2	\end{align*}
				for any functions $f_2(t,z_2,p)\in C^{1,2,2}([0,T]\times \mathbb R\times [0,1])$ and for any fixed $u.$  A similar formulation is used in (7.1) of \cite{buckdahn2007pathwise}. However, we are able to provide a semi-analytical equilibrium value function which, in particular, is deterministic rather than a random field. As discussed in the Introduction, the underlying reason lies in the linear structure of our wealth dynamics: the equilibrium policy \(u^*\) in (\ref{eq-3.9}) is independent of \(u_1\), and thus the equilibrium value function in (\ref{equivalue2}) is free of randomness. Consequently, the conditional expectations in (\ref{eq2-16-1-0})-(\ref{eq2-17-0}) reduce to deterministic functions.  

 Our approach follows the logic of first fixing the entire path of the random actions \(u_1\) and then solving the follower's optimization problem.  This reasoning is also closely related to ideas employed in mean-field models with common noise (see, e.g.,\ \cite{carmona2016mean} and \cite{bo2025mean}). The following theorem establishes the existence of a semi-analytical equilibrium policy \(u_2^*\) (derived from \(u^*\)). We emphasize that, for time-inconsistent problems, uniqueness of equilibrium generally remains an open question. Here, we provide one equilibrium solution by proving  a verification theorem.

\begin{theorem}[Follower's equilibrium strategy]\label{fo}
The equilibrium  policy $u^*$ is given by 
\begin{align}\label{eq-3.9}
		u^*(t,p)= \frac{\theta(p)-r}{\sigma^2\gamma_2}-\frac{\beta(p)\partial_p a_2(t,p)}{\sigma},
\end{align}
thus, the equilibrium trading equilibrium  of the follower $u_2^*$ is given by
	\begin{align}\label{eq3-2}
	u^*_2(t,p)=\frac{\theta(p)-r}{\sigma^2\gamma_2(1-\frac{\lambda_2}{2})}-\frac{\beta(p)\partial_p a_2(t,p)}{\sigma(1-\frac{\lambda_2}{2})}+u_1 \frac{{\lambda_2}}{2-\lambda_2},
\end{align}
where $a_2(t,p)$ is the unique solution to the following Cauchy problem 
\begin{equation*}\label{eq3-10}
	\left\{
	\begin{aligned}
		\partial_t a_2 + \frac{(\theta(p)-r)^2}{\sigma^2 \gamma_2}-\frac{\beta(p)(\theta(p)-r)\partial_p a_2}{\sigma}  +\frac{1}{2}\beta(p)^2\partial_{pp}a_2&=0,\ \text{for}\ (t,p)\in[0,T)\times (0,1), \\
	a_2(T,p)&=0, \ \text{for}\ p\in (0,1). 	\end{aligned}
	\right.
\end{equation*}
Moreover, the equilibrium response   value function under $u_2^*$ is 
\begin{align}\label{equivalue2}
	V_2(t,x_1,x_2,p)=(1-\frac{\lambda_2}{2})x_2-\frac{\lambda_2}{2}x_1+A_2(t,p),
\end{align}
where $A_2(t,p)$ is the unique solution to the following Cauchy problem
\begin{align*}
\begin{cases}
		\partial_t A_2	+ \frac{1}{2}\beta^2(p)\partial_{pp}A_2+ \mathcal R (t,p,\partial_p a_2) =0, \quad &\text{for} \ (t,p)\in [0,T)\times (0,1),
\\ A_2(T,p)=0, \quad &\text{for}\ p\in (0,1),		
\end{cases}
\end{align*}
	where 
	\begin{align*}
		\mathcal{R}(t,p,\partial_p a_2):=  (\theta(p)-r)\bigg[\frac{\theta(p)-r}{\sigma^2 \gamma_2}
		-\frac{\beta(p)\partial_p a_2 }{\sigma} \bigg]-\frac{\gamma_2}{2}\sigma^2 \bigg[\frac{\theta(p)-r}{\sigma^2\gamma_2}-\frac{\beta(p)\partial_p a_2}{\sigma}\bigg]^2\\
		- \frac{\gamma_2}{2}\beta(p)^2(\partial_p a_2)^2-\gamma_2 \sigma \beta(p) \partial_p a_2 \bigg[\frac{(\theta(p)-r)}{\sigma^2\gamma_2}-\frac{\beta(p)\partial_p a_2}{\sigma}\bigg].
	\end{align*}
\end{theorem}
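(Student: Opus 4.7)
The strategy is to combine the extended HJB system for time-inconsistent control of \cite{bjork2017time} with the random-field HJB formalism of \cite{buckdahn2007pathwise}, then exploit a structural collapse of the random field to a deterministic function. The first move is to work in the aggregate variables already introduced in (\ref{eq3-6}): the state $Z_2 = (1-\lambda_2/2)X_2^{\mathcal D} - (\lambda_2/2)X_1^{\mathcal D}$ and the aggregate control $u = (1-\lambda_2/2)u_2 - (\lambda_2/2)u_1(\delta(\cdot))$. In these variables the leader's sampled action $u_1$ disappears from the drift and diffusion of $Z_2$, and since $u_1(\delta(\cdot))$ is $\mathbb{G}$-adapted, optimizing over $u_2\in\mathcal{A}_2$ is equivalent to optimizing over $u$ in the same admissible class. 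The follower's objective then rewrites as a $\mathcal G_t$-conditional mean-variance of $Z_2(T)$ driven only by $(\widehat W,P)$.

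\textbf{Pathwise extended HJB and affine ansatz.} Next write the pathwise extended HJB system for the equilibrium value function $V_2(t,z_2,p)$ and the auxiliary function $g_2(t,z_2,p) = \mathbb E[Z_2^{u^*}(T)\mid\mathcal G_t]$, with terminal conditions $V_2(T,z_2,p)=g_2(T,z_2,p)=z_2$. The translation invariance of the objective in $z_2$ and the affine dependence of the $Z_2$-dynamics on $(u,z_2)$ motivate the ansatz
\[
g_2(t,z_2,p) = z_2 + a_2(t,p), \qquad V_2(t,z_2,p) = z_2 + A_2(t,p),
\]
where a priori $a_2, A_2$ are $\mathcal G_t$-measurable random fields. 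With $\partial_{z_2}g_2\equiv 1$ and $\partial_{z_2 z_2}V_2\equiv 0$, the supremum over $u$ in the HJB equation for $V_2$ reduces to a one-dimensional quadratic maximization of $u(\theta(p)-r) - \tfrac{\gamma_2}{2}\sigma^2 u^2 - \gamma_2\sigma\beta(p)u\,\partial_p a_2$ whose unique argmax is exactly the stated $u^*(t,p)$. Crucially, this $u^*$ depends only on $(t,p)$, hence is deterministic and decoupled from $u_1$; consequently the conditional expectations in (\ref{eq2-16-1-0})--(\ref{eq2-17-0}) reduce to ordinary expectations, and $a_2, A_2$ are in fact deterministic functions.

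\textbf{Cauchy problems and verification.} Substituting $u^*$ back into the HJB equation for $g_2$ gives the linear parabolic Cauchy problem for $a_2$ stated in the theorem; substituting $u^*$ into the HJB equation for $V_2$ and grouping the drift, diffusion, and variance-penalty terms produces the Cauchy problem for $A_2$ with source $\mathcal R(t,p,\partial_p a_2)$. Well-posedness on $[0,T]\times(0,1)$ follows from standard interior parabolic theory, since $\theta,\beta$ are smooth and $P$ stays in $(0,1)$ a.s.\ by (\ref{eq2.4}), so the degeneracy of $\beta$ at $\{0,1\}$ is harmless. The formula (\ref{eq3-2}) for $u_2^*$ is then immediate from inverting the linear relation between $u$ and $u_2$. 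For the verification of (\ref{eq3-3}), apply It\^o's formula to $V_2(s,Z_2^{h,v_2}(s),P(s))$ and to the variance correction expressed through $g_2$ on $[t,t+h]$, and use the PDEs satisfied by $a_2$ and $A_2$ to obtain
\[
J_2^{\mathcal D}(t,z_2,p;u^*) - J_2^{\mathcal D}(t,z_2,p;u^{h,v_2}) = \mathbb E\!\left[\int_t^{t+h}\bigl(H(s,P(s);u^*(s)) - H(s,P(s);v(s))\bigr)\,ds\,\Big|\,\mathcal G_t\right] + o(h),
\]
where $H$ denotes the Hamiltonian pointwise-maximized at $u^*$ and $v$ is the aggregate perturbation associated with $v_2$. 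Non-negativity of the integrand, followed by dividing by $h$ and sending $h\downarrow 0$, yields (\ref{eq3-3}).

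\textbf{Main obstacle.} The delicate part is the bookkeeping of the $\mathcal G_t$-conditional expectations in the pathwise HJB system, particularly the cross terms between $\partial_p g_2$, $\beta(p)$, and $u$ in the variance correction, which a priori couple the problem to the entire sampled path of $u_1$. The decisive simplification is precisely the collapse of the random-field structure noted above: once $u^*$ is shown to be deterministic in $(t,p)$, all the randomness carried by $u_1$ drops out of $g_2$ and $V_2$, and the verification reduces to the deterministic extended-HJB verification of \cite{bjork2017time} in the enlarged state $(z_2,p)$, at which point the argument becomes routine.
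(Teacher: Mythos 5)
Your proposal is correct and follows essentially the same route as the paper: pass to the aggregate state $Z_2$ and control $u$, impose the affine ansatz $g_2=z_2+a_2(t,p)$, $V_2=z_2+A_2(t,p)$ in the extended HJB system, obtain $u^*$ from the pointwise quadratic maximization (whence the collapse of the random field to a deterministic function), read off the two Cauchy problems, and verify the equilibrium property via It\^o's formula and the first-order expansion in $h$. The only cosmetic difference is that the paper justifies well-posedness of the degenerate Cauchy problems by citing Lemma~3.3 and Corollary~3.1 of \cite{huang2023partial} rather than appealing generically to interior parabolic theory.
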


\begin{proof}
	The proof is given in Appendix \ref{prooffo}.
	
\end{proof}

We observe that the equilibrium strategy \(u_2^*\) in (\ref{eq3-2}) is consistent with the equilibrium strategy derived in Theorem 3.2 of \cite{huang2023partial}, where the intra-personal equilibrium strategy of \(N\) investors with partial information under relative performance concerns is characterized. As explained in \cite{huang2023partial}, the first term in (\ref{eq3-2}) represents the myopic demand, whereby the follower (naively) treats the estimate \(\theta(p)\) as the true drift \(\mu\). The second term in (\ref{eq3-2}) captures the hedging demand against fluctuations of the filtering process \(P(\cdot)\) over time. The third term in (\ref{eq3-2}) reflects the interaction with the leader's sampled actions. Furthermore, the coefficient \(a_2(t,p)\) in (\ref{eq3-2}) coincides with the \textit{anticipated portfolio gains} defined in Equation (23) of \cite{basak2010dynamic}.

\begin{remark}
	When $\lambda_2=0$, the optimal (discount) trading strategy $u^*_2$ in (\ref{eq3-2}) reduces to
	\begin{align*}
		u^*_2(t,p)= \frac{\theta(p)-r}{\sigma^2\gamma_2}-\frac{\beta(p)\partial_p a_2(t,p)}{\sigma}	,
		\end{align*}
		which corresponds to the a single investor with partial information (see, e.g., Equation (3.45) in \cite{huang2023partial}). This is intuitive, as the follower no longer values relative performance and thus behaves as if acting alone.  	
\end{remark}

\begin{remark}
	As mentioned above, the equilibrium value function \(V_2\) in (\ref{equivalue2}) is a deterministic function of \((t,x_1,x_2,p)\), independent of the leader's sampled actions \(u_1(\delta(t))\). While this may seem counterintuitive, it follows from the linear structure of the wealth dynamics: since \(u_1\) and \(u_2\) enter linearly, the follower's best-response strategy \(u_2^*\) is chosen such that  
\[
\Bigl(1-\tfrac{\lambda_2}{2}\Bigr) u_2^* - \tfrac{\lambda_2}{2} u_1
   = \frac{\theta(p)-r}{\sigma^2 \gamma_2}
     - \frac{\beta(p)\,\partial_p a_2(t,p)}{\sigma}
\]
always holds. Hence, although \(u_2^*\) reacts to the leader's actions, the equilibrium value function remains unaffected and is therefore deterministic rather than a random field. This distinguishes our setting from works (see, e.g.,\ \cite{buckdahn2007pathwise}, \cite{graewe2015non}) where value functions are defined as random fields.
	\end{remark}

\section{The leader's optimization problem}\label{leadersec}

Now we consider the optimization problem of the leader. Formally, the leader's problem is to optimize her objective functional subject to the follower's best-response strategy characterized above. To evaluate the performance of a stochastic policy \(\Pi\), we adopt the framework recently developed in the reinforcement learning (RL) literature (see, e.g., \citep{wang2020reinforcement, wang2020continuous, dai2023learning}) and derive the exploratory version of the wealth process associated with the randomized policy \(\Pi_t\).

\subsection{Exploratory wealth process  }
In line with \cite{wang2020continuous} and \cite{dai2023learning},  we start with a discrete-time setting. We divide the whole time interval $[0,T]$ into small intervals of size $\Delta t$. Given an action $u_1 \in \mathbb R$, the instantaneous change of the discounted wealth process $X_1^{\mathcal D}$ (cf.\ (\ref{eq3.1})) in the interval $[t,t+\Delta t]$ is 
\begin{align}\label{eq4-1}
	\Delta X_1^{\mathcal D}(t)=[ u_1(\delta(t))(\theta(P_t)-r)] \Delta t + \sigma u_1(\delta( t)) \Delta \widehat{ W}(t).
\end{align}
Now we assume that the leader takes action randomly according to a policy distribution $\Pi_t$ that is independent of the underlying Brownian motion $\widehat{W}_t$. Focusing on the first and second moments of the randomized policy, we replace $u_1$ with $\widetilde{b}_t+\widetilde{\sigma}_t\epsilon_t$, where  $\epsilon_t$ is a random variable with zero mean and unit variance independent of $\widehat{W}(t)$, and
\begin{align*}
\widetilde{b}_t:= \int_{\mathbb{R}}u_1 \Pi_t(u_1)du_1, \quad	\widetilde{\sigma}_t:=\sqrt{\int_{\mathbb{R}}u_1^2 \Pi_t(u_1)du_1-(\widetilde{b}_t)^2 }, \quad \Pi \in \mathcal P (\mathbb{R}).
\end{align*}
It follows 
\begin{align*}
	\Delta X_1^{\mathcal D}(t)&=[(\widetilde{b}_t+\widetilde{\sigma}_t\epsilon_t)(\theta(P_t)-r)] \Delta t + \sigma (\widetilde{b}_t+\widetilde{\sigma}_t\epsilon_t) \Delta \widehat{ W}(t)\\
	&=(\theta(P_t)-r)\widetilde{b}_t\Delta t+ \sigma \widetilde{b}_t \Delta \widehat{ W}(t)+ \sigma \widetilde{\sigma}_t\epsilon_t	  \Delta \widehat{ W}(t)+	(\theta(P_t)-r)\widetilde{\sigma}_t\epsilon_t \Delta t. 	\end{align*}
Since the term $(\theta(P_t)-r)\widetilde{\sigma}_t\epsilon_t \Delta t$ is a mean zero random variable of size $O(\Delta t)$ and the strategy noises $\epsilon_t$ are mutually independent between time intervals, by the law of large numbers, the term will vanish when we take the sum over the whole time interval and send $\Delta t$ to zero. In addition, as $\epsilon_t \Delta \widehat{W}(t)$ is a mean zero random variable of size $o(\sqrt{\Delta t})$, its summation is asymptotically Gaussian by the central limit theorem. Furthermore, we have $\Cov(\epsilon_t \Delta \widehat{W}(t), \Delta \widehat{W}(t) )=0$ as $\epsilon_t$ is independent of $\widehat{W}(t)$. Thus, $\epsilon_t \Delta \widehat{W}(t)$ can be approximately treated as the increment of another Brownian motion independent of $\widehat{W}(t)$.

Inspired by the above observations, we replace (\ref{eq4-1}) with the following process that is associated with randomized policy $\Pi$ and will be used in the exploratory formulation:
\begin{align}\label{eq4-2}
	d \widetilde{X}_1(t)=[ \widetilde{b}_t (\theta(P_t)-r)] d t + \sigma \widetilde{b}_t d \widehat{ W}(t)+ \sigma \widetilde{\sigma}_td\overline{W}(t), \quad \widetilde{ X}_1(0)=x_1,
  \end{align} 
where $\overline{W}(t)$ is another Brownian motion independent of $\widehat{W}(t)$. 

Next, we introduce the exploratory formulation of discounted wealth process $X_2^{\mathcal D}$. Similar to the above derivation, given an action $u_1 \in \mathbb R$, the instantaneous change of the discounted wealth process $X_2^{\mathcal D}$ in the interval $[t,t+\Delta t]$ is 
\begin{align}\label{4-3-1}
	\Delta X_2^{\mathcal D}(t)=[ u^*_2(t)(\theta(P_t)-r)] \Delta t + \sigma u_2^*( t) \Delta \widehat{ W}(t).
\end{align}
From (\ref{eq3-2}) we know that $u_2^*$ is a linear function of $u_1$. Therefore, 
\begin{align*}
	 \int_{\mathbb{R}} u^*_2(t) \Pi_t(u_1)du_1=\frac{\theta(p)-r}{\sigma^2\gamma_2(1-\frac{\lambda_2}{2})}-\frac{\beta(p)\partial_p a_2(t,p)}{\sigma(1-\frac{\lambda_2}{2})} + \frac{{\lambda_2}}{2-\lambda_2} \widetilde{b}_t:= \Gamma(t,p)+\kappa \widetilde{b}_t, 
	 	\end{align*}
with $\kappa:= \frac{{\lambda_2}}{2-\lambda_2}$ and 
\begin{align*}
	\sqrt{\int_{\mathbb{R}}(u^*_2)^2 \Pi_t^1(u_1)du_1- \bigg( \int_{\mathbb{R}} u^*_2(t) \Pi_t(u_1)du_1 \bigg)^2 }=\kappa\widetilde{\sigma}_t, \quad \Pi \in \mathcal P (\mathbb{R}).
\end{align*}
Now we replace $u^*_2$ in (\ref{4-3-1}) with $\Gamma(t,p)+\kappa \widetilde{b}_t+\kappa \widetilde{\sigma}_t\epsilon_t$, where  $\epsilon_t$ is a random variable with zero mean and unit variance independent of $\widehat{W}(t)$. Then the exploratory formulation of $X_2^{\mathcal D}$ is given by 
\begin{align}\label{eq4-3}
	d \widetilde{X}_2(t)=[(\Gamma(t,P_t)+ \kappa \widetilde{b}_t) (\theta(P_t)-r)] d t + \sigma (\Gamma(t,P_t)\nonumber \\+ \kappa \widetilde{b}_t) d \widehat{ W}(t)+ \sigma \kappa \widetilde{\sigma}_td\overline{W}(t), \quad \widetilde{ X}_2(0)=x_2.
	\end{align}
Mathematically, the formulation coincides with the notations in the \textit{relaxed control} framework in classical control theory (see, e.g.,\ \cite{fleming1984stochastic}, \cite{zhou1992existence}). To quantify the degree of randomness in the leader's stochastic policy $\Pi$, we incorporate an entropy regularization term into the objective functional:
\begin{align}\label{2-4-1}
	\widetilde{ J}_1(t,\boldsymbol{x},p):= \mathbb{E}\bigg[\widetilde{ X}_1(T)-\lambda_1\overline{\widetilde{ X}}(T)+\lambda_0 \int^T_0 H(\Pi_t)dt \bigg]-\frac{\gamma_1}{2} {\mathrm{Var}}[\widetilde{ X}_1(T)-\lambda_1 \overline{\widetilde{ X}}(T)],
\end{align}
where $\overline{ \widetilde{X} }:=\frac{1}{2}(\widetilde{ X}_1+\widetilde{X}_2)$, $\lambda_0$ quantifies the randomization in a strategy $\Pi$  and $H$ is Shannon's differential entropy of the policy distribution defined as:
\begin{align*}
	H(\Pi_t)=-\int_{\mathbb{R}}\Pi_t(u_1)\log \Pi_t(u_1)du_1.
\end{align*}

\begin{remark}
	The exploratory formulation adopted in this section is inspired by recent research on stochastic control problems within the continuous-time reinforcement learning (RL) framework, first established by \cite{wang2020reinforcement}. Subsequently, \cite{wang2020continuous} applied this framework to solve the continuous-time mean-variance portfolio problem. More recently, \cite{dai2023learning} extended the exploratory stochastic control approach to an incomplete market setting, where asset returns are correlated with a stochastic market state, and derived an equilibrium policy under a (log-return) mean-variance criterion. 

Although our exploratory wealth dynamics and the objective functional in (\ref{2-4-1}) share certain similarities with this literature, our perspective is fundamentally different. In the RL framework, exploration is induced by learning unknown parameters and incorporating an entropy regularizer. By contrast, in our model, the exploratory formulation is introduced to capture the randomized strategy adopted by the leader to preserve her informational advantage, while the entropy term serves to quantify the degree of randomness in the leader's stochastic policy. This distinction marks a crucial difference between our work and the existing literature in this field.
\end{remark}

\subsection{The leader's equilibrium strategy}

In line with the intra-personal equilibrium strategy of the follower introduced above, we now define the intra-personal equilibrium for the leader. In particular, we first introduce its exploratory version.

\begin{definition}[Leader's intra-personal equilibrium $\Pi^*$: exploratory version]\label{def2.5}
For any $t\in[0,T]$ and initial point $(t,x_1,x_2,p)$, we define \begin{align*}
	\Pi^{h,\widetilde{ \pi}}_s= 	\begin{cases}
			\widetilde{ \pi}(s), \quad \text{for} \ t\leq s \leq t+h,\\
			\Pi_s, \quad \text{for} \ t+h \leq s\leq T,
		\end{cases}
	\end{align*} 
with a fixed real number $h>0$ and a fixed $\widetilde{ \pi} \in \mathcal{A}_1$.

Given optimal response strategy $u_2^* \in \mathcal{A}_2$, and if	\begin{align}\label{eq4-5-2}
		\limsup_{h \downarrow 0} \frac{\widetilde{ J}_1(t,\boldsymbol{x},p;\Pi^{h,\widetilde{ \pi}},u_2^* )- \widetilde{ J}_1(t,\boldsymbol{x},p;\Pi^{*}, u_2^*)}{h} \leq 0,
	\end{align}
	for all $\widetilde{ \pi}\in \mathcal{A}_1$ with finite entropy, we say that $\Pi^*$ is an intra-personal equilibrium of leader. 
	
	\end{definition}
The definition is analogous to Definition 2.2 in \cite{dai2023learning}.
Furthermore, the equilibrium  value function of leader is defined as
\begin{align}\label{eq4-5-1}
	\widetilde{ V}_1(t,x_1,x_2,p):=\widetilde{ J}_1(t,x_1,x_2,p;\Pi^*,u^*_2).
\end{align}

For the subsequent analysis of the \(\epsilon\)-Stackelberg equilibrium, we introduce the sampled version of an intra-personal equilibrium for future reference. Under the time grid $\mathcal D$, the leader looks for a trading strategy $\Pi \in \mathcal A_1$ that maximize the mean-variance objective 
\begin{align*}
	J_1^{\mathcal D}(t,\boldsymbol{x},p;\Pi)= \mathbb{E}\bigg[X_1^{\mathcal D}(T)-\lambda_1\overline{X}^{\mathcal D}(T) +\lambda_0 \int^T_0 H(\Pi_t)dt\bigg]-\frac{\gamma_1}{2} {\mathrm{Var}}[X_1^{\mathcal D}(T)-\lambda_1\overline{X}^{\mathcal D}(T)],
\end{align*}
where ${X}^{\mathcal D}_i(t),i=1,2,$ is the sampled wealth processes in (\ref{eq3.1}).%the entropy term is defined by 
%\begin{align*}
%E_{\Pi}(t,x_1,x_2,p):= -\int_{\mathbb R}\Pi(u_1|t,x_1,x_2,p) \log \Pi(u_1|t,x_1,x_2,p) du_1 
%\end{align*}

\begin{definition}[Leader's $\epsilon$-intra-personal equilibrium $\Pi^*$: sampled version]\label{sampled}\label{def2.6}
Given a time grid $\mathcal D$ and fixed point ($t_i,x_1,x_2,p$), we define
\begin{align*}
	\Pi^{\pi}_t= 	\begin{cases}
			\pi(t), \quad \text{for} \ t=t_i,\\
			\Pi_t, \quad \text{for} \ t= t_{i+1},...,T,
		\end{cases}
\end{align*}
 with a  fixed ${ \pi}\in \mathcal A_1$.
Given optimal response strategy $u_2^* \in \mathcal{A}_2$, and if	for every fixed $(t_i,x_1,x_2,p)$, the following condition holds
\begin{align*}
	\sup_{{ \pi}\in \mathcal A_1}J^{\mathcal D}_1(t, \boldsymbol{x},p;\Pi^{\pi},u_2^* )\leq J^{\mathcal D}_1(t, \boldsymbol{x},p;\Pi^*,u_2^* )+\epsilon
\end{align*}
	for all distributions with finite entropy  $\pi \in \mathcal A_1 $, we say that $\Pi$ is an $\epsilon$-intra-personal equilibrium of leader. 
	
	\end{definition}
Accordingly, the equilibrium  value function of leader is defined as
\begin{align}\label{eq4-7-1}
{ V}_1(t,x_1,x_2,p):={ J}_1^{\mathcal D}(t,x_1,x_2,p;\Pi^*,u_2^*).
\end{align} 
Moreover, the profile $(\Pi^{*}_t,u^*_2(u_1^*))$ is called the time-consistent $\epsilon$-Stackelberg equilibrium of the game and $V_2$ in (\ref{eq2-16}), $V_1$ in (\ref{eq4-7-1}) are corresponding equilibrium value functions.

To this end, we first characterize the exploratory version of the intra-personal equilibrium that satisfies condition (\ref{eq4-5-2}), together with the corresponding equilibrium value function \(\widetilde{V}_1\) in (\ref{eq4-5-1}).

Similar to (\ref{eq3-6}), we introduce the following equivalent formulation. Let $ Z_1(t)= (1-\frac{\lambda_1}{2})\widetilde{ X}_1(t)-\frac{\lambda_1}{2}\widetilde{ X}_2(t)$ be the wealth difference of the two investors.  From (\ref{eq4-2})-(\ref{eq4-3}) we have that ${Z}_1$ follows the dynamic 
		\begin{align}\label{eq3-5}
			d{ Z}_1(t)= \Big[\Big(\chi \widetilde{b}_t-\frac{\lambda_1}{2}\Gamma(t,P_t)\Big) (\theta(P_t) -r) \Big]dt+ \sigma \Big(\chi \widetilde{b}_t-\frac{\lambda_1}{2}\Gamma(t,P_t)\Big)	d\widehat{W}(t)+ \sigma \chi \widetilde{\sigma}_t d\overline{W}(t), 
							\end{align}
							where ${ Z}_1(0)=z_1:=(1-\frac{\lambda_1}{2})x_1-\frac{\lambda_1}{2}x_2	$, $\chi:=(1-\frac{\lambda_1}{2}-\frac{\lambda_1}{2}\kappa)=\frac{2-\lambda_2-\lambda_1}{2-\lambda_2} $ and $P(\cdot)$ is given in (\ref{eq3.1}). Accordingly, we can rewrite (\ref{2-4-1}) as 
							\begin{align*}
								\widetilde{ J}_1(t,z_1,p;\Pi,u_2^*):= \mathbb{E}\bigg[{Z}_1(T) +\lambda_0 \int^T_0 H(\Pi_t)dt \bigg]-\frac{\gamma_1}{2} {\mathrm{Var}}[{Z}_1(T)].
							\end{align*}
							Moreover, the equilibrium value function of the leader is redefined as
\begin{align*}
	\widetilde{ V}_1(t,z_1,p):=\widetilde{ J}_1(t,z_1,p;\Pi^*,u_2^*).
\end{align*} 
and the corresponding auxiliary value function is redefined as 
\begin{align*}
	\widetilde{g}_1(t,z_1,p):= \mathbb{E}[Z_1^{\Pi^*}(T)].
\end{align*}
							
									For the wealth dynamics (\ref{eq3-5}) with $P(\cdot)$ in (\ref{eq3.1}), the variational operator $\mathcal{A}_1$ is defined by 
	\begin{align*}
		\mathcal{A}_1 f_1(t,z_1,p):= \Big[(\chi \widetilde{b}-\frac{\lambda_1}{2}\Gamma)(\theta(p)-r)\Big] \partial_{z_1} f_1+\frac{1}{2}\sigma^2 \Big[(\chi \widetilde{b}-\frac{\lambda_1}{2}\Gamma)^2+\chi^2 \widetilde{\sigma}^2 \Big] \partial_{z_1z_1} f_1 \\ +\frac{1}{2}\beta^2(p) \partial_{pp} f_1 		+\sigma \beta(p) (\chi \widetilde{b}-\frac{\lambda_1}{2}\Gamma)\partial_{{ z}_1p} f_1
		\end{align*}
	for any functions $f_1(t,z_1 ,p)\in C^{1,2,2}([0,T]\times \mathbb R\times [0,1])$.

The following theorem provides a semi-analytical equilibrium policy $\Pi^*$ of the leader.
\begin{theorem}[Leader's equilibrium strategy]\label{leader}
	The equilibrium  trading strategy of the leader $\Pi^{*}_t$ follows a Gaussian distribution and is given by
	\begin{align}\label{eq-optimal1}
	\Pi^{*}_t \sim \mathcal{N}\bigg(\frac{\theta(p)-r}{\sigma^2}l -\frac{\beta(p) }{\chi \sigma}\Big({\partial_p a_1}+{(1-\chi)\partial_p a_2} \Big), \frac{\lambda_0}{\gamma_1 \sigma^2 \chi^2 }  \bigg),
\end{align}
where $l=\frac{2\gamma_2-\lambda_2 \gamma_2 +\lambda_1 \gamma_1}{(2-\lambda_2-\lambda_1)\gamma_1\gamma_2}$ and $\chi=\frac{2-\lambda_2-\lambda_1}{2-\lambda_2}$, $a_1 \in C^{1,2}([0,T)\times (0,1))$ is the unique solution to the following Cauchy problem  
\begin{equation*}
	\left\{
	\begin{aligned}
		\partial_t a_1 + \frac{(\theta(p)-r)^2}{\sigma^2 \gamma_1}-\frac{\beta(p)(\theta(p)-r)\partial_p a_1}{\sigma}  +\frac{1}{2}\beta(p)^2\partial_{pp}a_1&=0,\ \text{for}\ (t,p)\in[0,T)\times (0,1), \\
	a_1(T,p)&=0, \ \text{for}\ p\in (0,1). 	\end{aligned}
	\right.
\end{equation*}
Moreover, the equilibrium  value function is 
\begin{align*}
	\widetilde{ V}_1(t,x_1,x_2,p)=(1-\frac{\lambda_1}{2})x_1-\frac{\lambda_1}{2}x_2+A_1(t,p),
\end{align*}
where $A_1$ is the unique solution to the following Cauchy problem
\begin{align*}
	\partial_t A_1+ \bigg[\frac{\theta(p)-r}{\sigma^2 \gamma_1}-\frac{\beta(p)\partial_p a_1}{\sigma}\bigg] (\theta(p) -r) + \frac{1}{2}\beta^2(p)\partial_{pp}A_1 -\frac{\gamma_1}{2}\sigma^2 \bigg[\frac{\theta(p)-r}{\sigma^2 \gamma_1}-\frac{\beta(p)\partial_p a_1}{\sigma}\bigg]^2   \nonumber \\
	 -\frac{\gamma_1}{2}\beta^2(p)(\partial_p a_1)^2-\gamma_1 \sigma  \beta(p)\partial_p a_1\bigg[\frac{\theta(p)-r}{\sigma^2 \gamma_1}-\frac{\beta(p)\partial_p a_1}{\sigma}\bigg]+\frac{\lambda_0}{2}\log\bigg(\frac{2\pi\lambda_0}{\gamma_1 \chi^2}\bigg).
	 \end{align*}
\end{theorem}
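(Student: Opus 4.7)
The plan is to extend the extended-HJB argument used in Theorem~\ref{fo} to the leader's time-inconsistent problem, now augmented by the entropy regularizer $\lambda_0 H(\pi)$ and with optimization carried out over probability densities rather than real numbers. Following the Bj\"ork--Murgoci paradigm, I would set up a coupled system for the pair $(\widetilde{V}_1(t,z_1,p),\widetilde{g}_1(t,z_1,p))$ with auxiliary function $\widetilde{g}_1 := \mathbb{E}[Z_1^{\Pi^*}(T)\mid \mathcal{G}_t]$. The HJB for $\widetilde{V}_1$ reads
\begin{align*}
\partial_t \widetilde{V}_1 + \sup_{\pi \in \mathcal{P}(\mathbb{R})} \Big\{ \mathcal{A}_1 \widetilde{V}_1 - \tfrac{\gamma_1}{2}\,\mathcal{H}^{\pi}[\widetilde{g}_1] + \lambda_0 H(\pi) \Big\} = 0,\qquad \widetilde{V}_1(T,z_1,p)=z_1,
\end{align*}
where $\mathcal{H}^{\pi}[\widetilde{g}_1]$ is the carr\'e-du-champ term arising from differentiating $\mathrm{Var}[Z_1(T)\mid \mathcal{G}_t]$ along the controlled dynamics (\ref{eq3-5}), while $\widetilde{g}_1$ solves the linear evolution $\partial_t \widetilde{g}_1 + \mathcal{A}_1 \widetilde{g}_1 = 0$ evaluated at $\pi=\Pi^*$, with terminal condition $z_1$.

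Motivated by the linearity of (\ref{eq3-5}) in $Z_1$, I would make the ansatz $\widetilde{V}_1(t,z_1,p) = z_1 + A_1(t,p)$ and $\widetilde{g}_1(t,z_1,p) = z_1 + a_1(t,p)$. Then $\partial_{z_1}\widetilde{V}_1 = \partial_{z_1}\widetilde{g}_1 = 1$ and $\partial_{z_1 z_1}\widetilde{V}_1 = 0$, so the supremum collapses to the pointwise problem of maximizing
\begin{align*}
\int_{\mathbb{R}} \big[\alpha_1(t,p)\,u_1 + \alpha_2(t,p)\,u_1^2\big]\,\pi(u_1)\,du_1 + \lambda_0 H(\pi) + \text{(terms independent of }\pi\text{)}
\end{align*}
over $\pi \in \mathcal{P}(\mathbb{R})$, with $\alpha_2 = -\tfrac{\gamma_1}{2}\sigma^2\chi^2<0$ and $\alpha_1 = \chi(\theta(p)-r) + \tfrac{\gamma_1\sigma^2\chi\lambda_1}{2}\Gamma(t,p) - \gamma_1\sigma\beta(p)\chi\,\partial_p a_1$. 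The standard variational calculation identifies the unique maximizer as the Gibbs density $\pi^*(u_1)\propto \exp\!\big((\alpha_1 u_1 + \alpha_2 u_1^2)/\lambda_0\big)$, which is Gaussian with variance $-\lambda_0/(2\alpha_2)=\lambda_0/(\gamma_1\sigma^2\chi^2)$ and mean $-\alpha_1/(2\alpha_2)$. Substituting the explicit form of $\Gamma(t,p)$ from Theorem~\ref{fo} and using the identity $1-\chi = \lambda_1/(2-\lambda_2)$ reduces the mean to the stated form involving $l$, $\partial_p a_1$, and $(1-\chi)\partial_p a_2$; the coefficient $l$ arises as $l = \tfrac{1}{\gamma_1\chi} + \tfrac{1-\chi}{\gamma_2\chi}$, which simplifies to the claimed expression.

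Finally, I would plug $\Pi^*$ back into the two equations: the linear equation for $\widetilde{g}_1$ yields the semilinear Cauchy problem for $a_1$ (structurally identical to the $a_2$-equation in Theorem~\ref{fo}, with $\gamma_1$ replacing $\gamma_2$), while the HJB yields the Cauchy problem for $A_1$, where the constant $\tfrac{\lambda_0}{2}\log\!\big(\tfrac{2\pi\lambda_0}{\gamma_1\chi^2}\big)$ originates from the Gaussian differential entropy $H(\pi^*)=\tfrac{1}{2}\log\!\big(2\pi e\,\mathrm{Var}(\pi^*)\big)$ after the cancellations coming from $\int u_1^2 \pi^*\,du_1 = \mathrm{Var}(\pi^*) + (\text{mean})^2$. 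A verification theorem modeled on Appendix~\ref{prooffo} then closes the argument: admissibility of $\Pi^*$ (finite second moment and $\mathbb{F}^{\mu,S}$-progressive measurability) follows from the Gaussian structure and the smoothness of $a_1,a_2$, while the equilibrium inequality (\ref{eq4-5-2}) is inherited from the first-order optimality of the Gibbs density together with the strict concavity ensured by $\alpha_2<0$. The main technical obstacle I anticipate is the careful bookkeeping required to translate between the density-level $\limsup$-condition (\ref{eq4-5-2}) and the moment-level quadratic-plus-entropy maximization, together with verifying that the $(a_1,A_1)$-Cauchy problems admit unique classical solutions with enough regularity for the verification step; beyond that point, the algebraic simplifications are routine extensions of those already performed for the follower.
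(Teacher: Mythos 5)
Your proposal follows essentially the same route as the paper's proof: the extended HJB system for $(\widetilde V_1,\widetilde g_1)$ under the $(Z_1,P)$ dynamics, the linear ansatz $\widetilde V_1=z_1+A_1$, $\widetilde g_1=z_1+a_1$, reduction of the supremum to a quadratic-plus-entropy problem yielding a Gaussian with exactly the stated mean and variance, the two Cauchy problems, and a verification step modeled on the follower's proof; your coefficients $\alpha_1,\alpha_2$ and the identity $l=\tfrac{1}{\gamma_1\chi}+\tfrac{1-\chi}{\gamma_2\chi}$ all check out. The only cosmetic difference is that you identify the maximizer directly as the Gibbs density $\propto\exp((\alpha_1u+\alpha_2u^2)/\lambda_0)$, whereas the paper first observes that the objective depends on $\Pi$ only through its mean and variance and then invokes the maximum-entropy characterization of the Gaussian before optimizing over $(\widetilde b,\widetilde\sigma^2)$ --- the two arguments are equivalent.
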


\begin{proof}
	The proof is given in Appendix \ref{proofleader}.
\end{proof}

One of our key findings is that the leader's equilibrium policy $\Pi^*$ follows a Gaussian distribution. Moreover, its variance decreases as the volatility of the risky asset increases, holding other parameters fixed. In addition, the mean of the Gaussian distribution is independent of the randomization parameter \(\lambda_0\), a feature also documented in \citep{wang2020reinforcement,wang2020continuous, dai2023learning}, which highlights a separation between exploration and exploitation.

In contrast to the pre-committed policy studied in \cite{wang2020continuous}, the variance of our equilibrium policy does not necessarily decay over time. Instead, the constant variance we obtain is consistent with the equilibrium policy characterized in \cite{dai2023learning}.

\begin{remark}
(1) When $\lambda_0 \to 0 $, the optimal strategy of the leader converges to
\begin{align*}
	\frac{\theta(p)-r}{\sigma^2}l -\frac{\beta(p) }{\chi \sigma}\Big({\partial_p a_1}+{(1-\chi)\partial_p a_2} \Big),\end{align*}
	which coincides with the optimal strategy of the leader in a Stackelberg game where both investors have partial information, and the first investor acts the leader while the second follows. 
	
(2)	Further, when $\lambda_0 \to 0$ and $\lambda_1=0$, we have $\chi=1$ and $l=\frac{1}{\gamma_1}$. In this case, the leader's strategy reduces to   
	\begin{align*}
	\frac{\theta(p)-r}{\sigma^2 \gamma_1 }-\frac{\beta(p){\partial_p a_1} }{\sigma},
		\end{align*}
		which corresponds to the optimal strategy of a single investor with partial information (see, e.g., Equation (3.45) in \cite{huang2023partial}) . %This is intuitive, as the first player no longer values relative performance and thus behaves as if acting alone.  	
\end{remark}

 %a Nash equilibrium $(\pi^*_1,\pi^*_2)$ and the corresponding equilibrium value functions $V_1(t,\textbf{x},p)$ and $V_2(t,\textbf{x},p)$ takes the following forms. 

\subsection{$\epsilon$-Stackelberg equilibrium}

From Theorem \ref{leader}, we show that the equilibrium policy $\Pi^*$ in (\ref{eq-optimal1}) is indeed the intra-equilibrium strategy of Definition \ref{def2.5}. Moreover, we have the following inequality:
\begin{align}\label{4-16-1}
	\widetilde{ J}_1(t,\boldsymbol{x},p;\Pi^{h,\widetilde{ \pi}}, u_2^*)-\widetilde{ J}_1(t,\boldsymbol{x},p;\Pi^{*},u_2^* )\leq o(h),
		\end{align}
 	 which implies that $\Pi^*$ is a weak equilibrium discussed in \cite{huang2021strong} and \cite{he2021equilibrium}. However, when the leader implements the stochastic policy \(\Pi^*\) on a given time grid \(\mathcal{D}\), i.e., by sampling actions from \(\Pi^*\), the sampled dynamics in (\ref{eq3.1}) must be considered. In particular, we shall prove that \(\Pi^*\) constitutes the \(\epsilon\)-intra-personal equilibrium defined in Definition \ref{sampled}. Building on this result, we conclude that the strategy profile \((\Pi^*_t,\, u^*_2(u_1^*))\) defines the time-consistent \(\epsilon\)-Stackelberg equilibrium of the game, with the corresponding equilibrium value functions given by \(V_2\) in (\ref{eq2-16}) and \(V_1\) in (\ref{eq4-7-1}).
 	% Next, we introduce the definition of strong equilibrium (see, e.g., Definition 2 of \cite{huang2021strong}).
 	% \begin{definition}
 	 %	We say $\Pi \in \mathcal A_1$ is a strong equilibrium if, for any $\widetilde{\pi} \in \mathcal A_1$, there exist $h>0$ such that 
 	 %	\begin{align*}
 	 %		\widetilde{ J}_1(t,\boldsymbol{x},p;\Pi^{h',\widetilde{ \pi}}, u_2)-\widetilde{ J}_1(t,\boldsymbol{x},p;\Pi^{1},u_2 )\leq 0	 \quad \forall 0<h'\leq h.	
 	 %		\end{align*}   
 	 %\end{definition}  	 
 	 
 	 %\textbf{Need TO DO:}
 	 
%\begin{pro}
%	The optimal policy $\Pi$ in (\ref{eq-optimal1}) is a strong equilibrium. 
%\end{pro}

	To proceed, we first establish the relationship between the exploratory dynamics and the sampled dynamics; specifically, the sampled dynamics ($X_1^{\mathcal D},X_2^{\mathcal D},P$) converge weakly to the exploratory dynamics ($\widetilde{X}_1,\widetilde{X}_2,P$) as the time grid \(\mathcal{D}\) is refined. The following result is borrowed from Theorem 4.1 in \cite{jia2025accuracy}, and we verify that the corresponding conditions are satisfied in our setting.
	Let $C^4_p([0,T] \times \mathbb R^d; \mathbb R)$ be the space of functions $f:[0,T]\times \mathbb R^d \to \mathbb R$ such that for all $r \in \mathbb N_0$ and multi-indices $s$ satisfying $2r+|s|\leq 4$, the partial derivative $\partial^r_t\partial^s_x f$ exists and it continuous for all $(t,x) \in [0,T] \times \mathbb R^d$, and they all have polynomial growth in $x$:
	\begin{align*}
		||f||_{C^4_p}:= \sum_{2r+|s|\leq 4} \sup_{(t,x) \in [0,T]\times \mathbb R^d}\frac{|\partial^r_t\partial^s_x f(t,x)|}{1+|x|^p}<\infty.
	\end{align*}
\begin{lemma}
Given a time grid $\mathcal D$ and there exists a constant $C\geq 0$ depending on only on $T,\theta,r,\sigma,\Pi^*$ such that 
\begin{align}\label{eq4-16-1}
	\sup_{t\in[0,T]}\Big|\mathbb E[f(X_1^{\mathcal D}(t))]-\mathbb E[f(\widetilde{ X}_1(t))] \Big| \leq C||f||_{C^4_p}|\mathcal D|
\end{align}
for any $f \in C^4_p(\mathbb R)$ with $p\geq 2$. Moreover, we have 
	\begin{align}\label{eq4-17-1}
	 \widetilde{J}_1(t,x_1,x_2,p;\Pi^*)=\lim_{|\mathcal D|\to 0}J_1^{\mathcal D}(t,x_1,x_1,p;\Pi^*).
	\end{align}
	%	Moreover, $\Pi$ is an $\epsilon$-intra-personal equilibrium defined in Definition \ref{def2.6}.
\end{lemma}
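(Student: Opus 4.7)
The plan is to deduce \eqref{eq4-16-1} as a direct invocation of \cite[Theorem~4.1]{jia2025accuracy} after verifying its hypotheses in our framework, and then bootstrap the functional convergence \eqref{eq4-17-1} by decomposing $J_1^{\mathcal D}$ into expectation, variance, and entropy pieces and controlling each via \eqref{eq4-16-1} applied to polynomial test functions of degree at most two.

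For \eqref{eq4-16-1}, I would first verify the smoothness and growth conditions underlying \cite[Theorem~4.1]{jia2025accuracy}. By Theorem~\ref{leader}, $\Pi^*_t = \mathcal{N}(\widetilde b(t,P_t), \widetilde\sigma^2)$, where $\widetilde b(t,p)$ is an affine combination of $\theta(p), \beta(p), \partial_p a_1(t,p)$ and $\partial_p a_2(t,p)$, and $\widetilde\sigma^2 = \lambda_0/(\gamma_1 \sigma^2 \chi^2)$ is a positive constant. Standard interior parabolic regularity applied to the Cauchy problems solved by $a_1$ and $a_2$ yields $C^{1,2}$ smoothness of $\widetilde b$ in $(t,p)$ and polynomial growth on compact subsets of $[0,T] \times (0,1)$; the Gaussian law of $\Pi^*$ has finite moments of all orders. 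The drift $u_1(\theta(P)-r)$ and diffusion $\sigma u_1$ of \eqref{eq2-14} are linear in $u_1$, giving the required polynomial growth in the control. With these verified, \eqref{eq4-16-1} follows verbatim from \cite[Theorem~4.1]{jia2025accuracy}.

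For \eqref{eq4-17-1}, I would decompose
\begin{align*}
J_1^{\mathcal D}(t,x_1,x_2,p;\Pi^*) = \mathbb E\bigl[X_1^{\mathcal D}(T) - \lambda_1\overline X^{\mathcal D}(T)\bigr] + \lambda_0 \int_0^T H(\Pi^*_s)\,ds - \frac{\gamma_1}{2}\mathrm{Var}\bigl[X_1^{\mathcal D}(T) - \lambda_1\overline X^{\mathcal D}(T)\bigr],
\end{align*}
and analogously for $\widetilde J_1$. The entropy integrand depends only on $(s,P_s)$; since $P$ is driven by $\widehat W$ alone and is unaffected by sampling, the entropy terms coincide in $J_1^{\mathcal D}$ and $\widetilde J_1$. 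For the remaining moments I would substitute the follower's response $u_2^*(t) = \Gamma(t,P_t) + \kappa u_1(\delta(t))$ into \eqref{eq2-14-1}, which gives the pathwise identity $X_2^{\mathcal D}(T) = \kappa X_1^{\mathcal D}(T) + R$ with $R$ a smooth functional of $(P, \widehat W)$ independent of sampling; hence every first and second moment of $X_1^{\mathcal D}(T) - \lambda_1 \overline X^{\mathcal D}(T)$ reduces to the expectation of a polynomial of degree at most two in $X_1^{\mathcal D}(T)$ with random coefficients that are $\sigma(\widehat W, P)$-measurable.

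The main obstacle I anticipate is that the cross moment $\mathbb E[X_1^{\mathcal D}(T)\,R]$ is not literally of the form $\mathbb E[f(X_1^{\mathcal D}(T))]$ for a deterministic $f \in C^4_p$, because $R$ couples to $X_1^{\mathcal D}$ through the shared Brownian driver $\widehat W$. The cleanest route is to invoke the multidimensional analogue of \cite[Theorem~4.1]{jia2025accuracy} for the enlarged state $(X_1^{\mathcal D}, P)$, whose exploratory coefficients in \eqref{eq4-2} inherit the same smoothness and linear-in-control structure and in which only $u_1$ injects sampling error; then all quadratic functionals of $(X_1^{\mathcal D}(T), P)$ belong to $C^4_p$ with $p \ge 2$ and converge at rate $|\mathcal D|$ to their exploratory counterparts. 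Combining the three pieces yields \eqref{eq4-17-1}.
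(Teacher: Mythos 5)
Your treatment of \eqref{eq4-16-1} matches the paper's: both verify the regularity and growth of the exploratory coefficients induced by the Gaussian policy $\Pi^*$ (via the boundedness and smoothness of $a_1$, $a_2$ and the constancy of the variance $\lambda_0/(\gamma_1\sigma^2\chi^2)$) and then invoke Theorem~4.1 and Remark~4.1 of \cite{jia2025accuracy}. Where you diverge is on \eqref{eq4-17-1}. The paper disposes of it in one line by ``choosing $f(x)=x$ and $f(x)=x^2$,'' which, taken literally for the scalar process $X_1^{\mathcal D}$ alone, does not cover the objective: $J_1^{\mathcal D}$ is built from $X_1^{\mathcal D}(T)-\lambda_1\overline{X}^{\mathcal D}(T)$, and its variance produces the cross moment $\mathbb E[X_1^{\mathcal D}(T)\,X_2^{\mathcal D}(T)]$ that you correctly identify as lying outside the scope of the one-dimensional estimate. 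Your fix --- passing to a multidimensional weak-error estimate on an enlarged state --- is the right idea, but the state you name, $(X_1^{\mathcal D},P)$, is not quite sufficient: the residual $R$ in your identity $X_2^{\mathcal D}(T)=\kappa X_1^{\mathcal D}(T)+R$ is a path functional of $(P,\widehat W)$, not a function of $P(T)$, so you must carry $X_2^{\mathcal D}$ (equivalently $R$) as an additional coordinate of the augmented diffusion before applying the estimate to quadratic test functions. The most economical route, and the one the paper's own change of variables suggests, is to apply the scalar estimate directly to $Z_1^{\mathcal D}=(1-\tfrac{\lambda_1}{2})X_1^{\mathcal D}-\tfrac{\lambda_1}{2}X_2^{\mathcal D}$, which is itself a sampled diffusion with coefficients affine in $u_1(\delta(t))$; then $f(z)=z$ and $f(z)=z^2$ genuinely suffice. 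Your observation that the entropy term depends only on $(s,P_s)$ and is therefore unaffected by the sampling grid is correct and is used implicitly by the paper as well. In short, your argument is sound modulo the choice of enlarged state, and it is more careful than the paper's on the second claim.
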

\begin{proof}
	  From Theorem \ref{leader}, the equilibrium policy \(\Pi^*\) is Gaussian with mean  
\[
\widetilde{b}_t \;=\; \frac{\theta(p)-r}{\sigma^2}l \;-\; \frac{\beta(p)}{\chi\sigma}\bigl(\partial_p a_1 + (1-\chi)\partial_p a_2 \bigr),
\]  
and variance  
\[
\widetilde{\sigma}_t^2 \;\equiv\; \frac{\lambda_0}{\gamma_1 \sigma^2 \chi^2}.
\]  
By the regularity of \(a_1\) and \(a_2\) derived in Theorem \ref{leader}, it follows that all coefficients of the exploratory dynamics \(\widetilde{X}_1\) in (\ref{eq4-2}), namely \(\widetilde{b}_t(\theta(P_t))-r\), \(\widetilde{b}_t\sigma\), and \(\sigma \widetilde{\sigma}_t\), belong to the class \(C^4_p\) and has  bounded derivatives.  Hence, by Theorem 4.1 and Remark 4.1 in \cite{jia2025accuracy},  (\ref{eq4-16-1}) is satisfied. By choosing \(f(x)=x\) and \(f(x)=x^2\), we obtain (\ref{eq4-17-1}).
	   \end{proof}
			
			Now we present our final results. 
	\begin{theorem}
		The equilibrium strategy $\Pi^{*}$ in (\ref{eq-optimal1}) is an $\epsilon$-intra-personal equilibrium of leader defined in Definition \ref{sampled}. Moreover, the profile $(\Pi^{*}_t,u^*_2(u_1^*))$ is the time-consistent $\epsilon$-Stackelberg equilibrium of the game.
	\end{theorem}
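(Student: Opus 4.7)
The plan is to leverage the two ingredients already assembled in the excerpt: the weak equilibrium property of $\Pi^*$ in the exploratory framework, namely inequality \eqref{4-16-1} coming out of Theorem~\ref{leader}, and the approximation estimate in the preceding lemma that relates the sampled functional $J_1^{\mathcal D}$ to the exploratory functional $\widetilde J_1$. Combining these two yields the desired $\epsilon$-intra-personal equilibrium property of $\Pi^*$ in the sampled formulation; the Stackelberg statement then reduces to a definitional observation, since the follower's response $u_2^*(u_1^*)$ from Theorem~\ref{fo} is the follower's best reply to any realized leader action.

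Concretely, I would proceed in four steps. First, fix a grid point $t_i\in\mathcal D$, fix $(x_1,x_2,p)$, and fix an arbitrary perturbation $\pi\in\mathcal A_1$ with finite entropy, giving the sampled perturbation $\Pi^\pi$ of Definition~\ref{sampled}. I would then build the corresponding exploratory perturbation $\Pi^{h,\widetilde\pi}$ of Definition~\ref{def2.5} by setting $\widetilde\pi(s)\equiv \pi$ on $[t_i,t_{i+1})$ with $h=t_{i+1}-t_i\le |\mathcal D|$, so that on $[t_i,t_{i+1})$ the exploratory dynamics is driven by the \emph{same} first two moments of $\pi$ that the sampled dynamics inherits from a draw $u_1\sim\pi$. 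Second, I would extend the approximation Lemma from $\Pi^*$ alone to this pair of strategies: because $\Pi^{h,\widetilde\pi}$ agrees with $\Pi^*$ outside a single interval and because $\widetilde\pi$ is admissible with finite entropy, the coefficients of both the sampled system \eqref{eq3.1} and the exploratory system \eqref{eq4-2}--\eqref{eq4-3} remain in the class $C^4_p$ with polynomial-growth bounds, so Theorem~4.1 of \cite{jia2025accuracy} applies and yields constants $C_1,C_2\ge 0$ with
\begin{equation*}
\bigl|J_1^{\mathcal D}(t,\boldsymbol x,p;\Pi^\pi,u_2^*)-\widetilde J_1(t,\boldsymbol x,p;\Pi^{h,\widetilde\pi},u_2^*)\bigr|\le C_1|\mathcal D|,\qquad
\bigl|J_1^{\mathcal D}(t,\boldsymbol x,p;\Pi^*,u_2^*)-\widetilde J_1(t,\boldsymbol x,p;\Pi^*,u_2^*)\bigr|\le C_2|\mathcal D|.
\end{equation*}

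Third, I would invoke the weak equilibrium inequality \eqref{4-16-1} at the point $(t_i,\boldsymbol x,p)$ to obtain
\begin{equation*}
\widetilde J_1(t_i,\boldsymbol x,p;\Pi^{h,\widetilde\pi},u_2^*)-\widetilde J_1(t_i,\boldsymbol x,p;\Pi^*,u_2^*)\le o(h)\le o(|\mathcal D|),
\end{equation*}
and chain the three estimates together to produce
\begin{equation*}
J_1^{\mathcal D}(t_i,\boldsymbol x,p;\Pi^\pi,u_2^*)\le J_1^{\mathcal D}(t_i,\boldsymbol x,p;\Pi^*,u_2^*)+\bigl[(C_1+C_2)|\mathcal D|+o(|\mathcal D|)\bigr].
\end{equation*}
Setting $\epsilon:=(C_1+C_2)|\mathcal D|+o(|\mathcal D|)$, which is uniform in $\pi$ since the constants depend only on $T,\theta,r,\sigma$ and the admissible class, gives precisely the $\epsilon$-intra-personal equilibrium condition of Definition~\ref{sampled}. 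Fourth, for the Stackelberg conclusion I would combine this with Theorem~\ref{fo}: for any realized sampled path $u_1^*$ generated from $\Pi^*$, the map $u_1\mapsto u_2^*(u_1)$ in \eqref{eq3-2} is the follower's intra-personal equilibrium best response, so $(\Pi^*_t,u_2^*(u_1^*))$ meets both prongs of the $\epsilon$-Stackelberg definition with value functions $V_1$ in \eqref{eq4-7-1} and $V_2$ in \eqref{eq2-16}.

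The main obstacle I anticipate is the second step: the approximation Lemma is stated for $\Pi^*$ alone, but the above argument needs it uniformly over the perturbed strategies $\Pi^{h,\widetilde\pi}$. To close this gap I would verify that $\Pi^{h,\widetilde\pi}$ still satisfies the Jia--Szpruch hypotheses, which amounts to checking $C^4_p$ regularity and polynomial-growth bounds on the mean and variance functionals of $\widetilde\pi$, and that the resulting constant $C_1$ can be taken uniform in $\pi$ over the class of admissible strategies with bounded entropy. A secondary technical point is ensuring the conditional variance contribution in $\widetilde J_1$ (the mean--variance objective is not linear in the law) behaves consistently under the approximation, which I would handle by applying the lemma separately to $f(x)=x$ and $f(x)=x^2$ as in the proof of \eqref{eq4-17-1}.
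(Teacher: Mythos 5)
Your proposal follows essentially the same route as the paper: combine the weak intra-personal equilibrium inequality \eqref{4-16-1} in the exploratory framework with the Jia--Szpruch approximation between the sampled and exploratory objectives, applied both to $\Pi^*$ and to the perturbed policy, then chain the estimates to obtain the $\epsilon$-bound and conclude via Theorem~\ref{fo} for the follower. Your version is somewhat more explicit about the error constants and about the need for the approximation bound to hold uniformly over perturbations $\pi$ (a point the paper's own proof also relies on but treats only loosely via the limit $\epsilon(\mathcal D^2,\Pi^\pi)\to\epsilon(\mathcal D^2,\Pi^*)$), but the underlying argument is the same.
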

		\begin{proof}
			From (\ref{eq4-17-1}), we can choose a time grid $\mathcal D^1$ such that 
			\begin{align*}
				J_1^{\mathcal D}(t,x_1,x_2,p;\Pi^*) +\epsilon(\mathcal D^1,\Pi^*) = \widetilde{J}_1(t,x_1,x_2,p; \Pi^*),
			\end{align*}
			where $\epsilon(\mathcal D^1,\Pi^*)$ implies that the approximation error depends on the chosen time grid $\mathcal D^1$ and the corresponding  equilibrium policy $\Pi^*$. 
			Combining with (\ref{4-16-1}), we have 
			\begin{align}\label{4-19-1}
				\widetilde{ J}_1(t,\boldsymbol{x},p;\Pi^{h,\widetilde{ \pi}}, u_2^*)-J_1^{\mathcal D}(t,x_1,x_2,p;\Pi^*) \leq  \epsilon(\mathcal D^1,\Pi^*)+o(h),			
				\end{align}
			Similarly, we can choose another time grid $\mathcal D^2$ such that 
			\begin{align}\label{4-20-1}
				J_1^{\mathcal D}(t,x_1,x_2,p;\Pi^{{\pi}}) +\epsilon(\mathcal D^2,\Pi^{\pi}) = \widetilde{J}_1(t,x_1,x_2,p; \Pi^{h,\widetilde{\pi}}),
			\end{align}
			where $\epsilon(\mathcal D^2,\Pi^{\pi})$ implies that the approximation error depends on the chosen time grid $\mathcal D^2$ and the corresponding policy $\Pi^{\pi}$. From (\ref{4-19-1}) and (\ref{4-20-1}), we have 
			\begin{align*}
				J_1^{\mathcal D}(t,x_1,x_2,p;\Pi^{{\pi}})  -J_1^{\mathcal D}(t,x_1,x_2,p;\Pi^*) \leq  \epsilon(\mathcal D^1,\Pi^*)- \epsilon(\mathcal D^2,\Pi^{\pi})+o(h).
							\end{align*}	
							On the one hand, we verify that 
							\begin{align*}
								\lim_{h\downarrow 0} \epsilon (\mathcal D^2,\Pi^\pi)
   =  \epsilon(\mathcal D^2,\Pi^*).
 							\end{align*}
 							On the other hand, we can choose a smaller time grid $\mathcal D$ such that 
\begin{align*}
 								\epsilon(\mathcal D^1,\Pi^*)- \epsilon(\mathcal D^2,\Pi^{*}) \leq \epsilon (\mathcal D).  							\end{align*}

 								Therefore, we conclude that 
 								\begin{align*}
 									\sup_{{ \pi}\in \mathcal A_1}J^{\mathcal D}_1(t, \boldsymbol{x},p;\Pi^{\pi},u_2^* )\leq J^{\mathcal D}_1(t, \boldsymbol{x},p;\Pi^*,u_2^* )+\epsilon,
 								\end{align*} 								
 								that is, the equilibrium strategy $\Pi^{*}$ in (\ref{eq-optimal1}) is an $\epsilon$-intra-personal equilibrium of leader defined in Definition \ref{sampled}. Moreover, the profile $(\Pi^{*}_t,u^*_2(u_1^*))$ is the time-consistent $\epsilon$-Stackelberg equilibrium of the game.

		\end{proof}

\section{Conclusions}\label{conclusion}
In this paper, we study a two-player Stackelberg game in which the leader has full information about the stock return, while the follower only observes the stock price process without knowledge of the true drift. This generates an asymmetric information structure. Moreover, both investors care not only about their own terminal wealth, but also about its relative performance compared to the average terminal wealth of both players. We characterize the \(\epsilon\)-Stackelberg equilibrium, in which each investor attains an intra-personal equilibrium due to the time-inconsistent nature of the mean-variance objective. In particular, we show that, in order to preserve her informational advantage, the leader adopts randomized strategies, and we prove that the equilibrium policy follows a Gaussian distribution with constant variance.  

The framework and methodology developed in this paper can be applied more broadly to asymmetric information problems. A natural extension is to analyze the Nash equilibrium, where both players choose their strategies simultaneously, as in \cite{huang2023partial}. Another promising direction is to consider more realistic stock dynamics in incomplete markets (cf.\ \cite{dai2023learning}) or to incorporate price impact effects (cf.\ \cite{garleanu2013dynamic, garleanu2016dynamic}). We leave these extensions for future research.

\appendix
\setcounter{subsection}{0}
\renewcommand\thesubsection{A.\arabic{subsection}}
%\section{Technical Estimates}
\setcounter{equation}{0}
\renewcommand\theequation{A.\arabic{equation}}
   \renewcommand{\thelemma}{\Alph{section}.\arabic{lemma}}

    		\section{ Proofs}
\renewcommand\theequation{A.\arabic{equation}}
\renewcommand\thesubsection{A.\arabic{subsection}}

%\subsection{Proof of Theorem \ref{verification}}\label{proofverification}

%\begin{proof}
%	The proof consists of two steps:
	
	%We start by showing that $V_2$ is the value function corresponding to $u_2^*$, that is, $V_2(t,\boldsymbol{x},p)=J^{\mathcal D}_2(t,\boldsymbol{x},p;u_2^*,u_1)$, and $g_2$ has the interpretation ().
	
	%To show that $W^i(t,\boldsymbol{x},p)=J(t,\boldsymbol{x},\pi^*)$, we use %the () to obtain
%	\begin{align*}
	%\mathcal{A}_1^{\pi_1,\pi^*_2}W^i(t,\boldsymbol{x},p)-\mathcal{A}_1^{\pi_1,\pi^*_2} \frac{\gamma_i}{2} h^i(t,\boldsymbol{x},p)^2+ \gamma_i h^i(t,\boldsymbol{x},p) \mathcal{A}_1^{\pi_1,\pi^*_2}h^i(t,\boldsymbol{x},p)  	=0.	\end{align*}
%	Since 
%\end{proof}

\subsection{Proof of Theorem \ref{fo}}\label{prooffo}
\begin{proof}

To find such an intra-personal equilibrium $u_2^*$ (derived by $u^*$), we first introduce the extended HJB equation in \cite{bjork2017time} for the follower. Assuming that the random actions used by leader are given, the follower strives to find a strategy $u_2^*$ (derived by $u^*$) that satisfies (\ref{eq3-3}). The same derivation in \cite{bjork2017time}, under the dynamics of $(Z_2,P)$, then yields 
\begin{align}\label{3-5}
	\partial_t V_2+ \sup_{u}\Big\{  [ u(\theta(p)-r)] \partial_{z_2} V_2+\frac{1}{2}\sigma^2 u^2 \partial_{z_2z_2}V_2 +\frac{1}{2}\beta^2(p)\partial_{pp}V_2  +\sigma \beta(p)u \partial_{z_2p}V_2 \nonumber\\ -\frac{\gamma_2}{2}\sigma^2 u^2(\partial_{z_2} g_2)^2-\frac{\gamma_2}{2}\beta^2(p)(\partial_{p} g_2)^2 
	 -\gamma_2u \sigma \beta(p)\partial_{p} g_2\partial_{z_2} g_2 \Big\}=0,
\end{align}
with the terminal condition $V_2(T,z_2,p)=z_2$, where the function $g_2$ satisfies 
\begin{align}\label{3-6}
 \partial_t g_2 +[ u^*(\theta(p)-r)] \partial_{z_2} g_2+\frac{1}{2}\sigma^2 (u^*)^2 \partial_{z_2z_2}g_2 +\frac{1}{2}\beta^2(p)\partial_{pp}g_2 +\sigma \beta(p)u^*\partial_{z_2p}g_2=0,
			\end{align}  
with the terminal condition $g_2(T,z_2,p)=z_2$.

\textbf{Step 1:} Solving the extended HJB equation (\ref{3-5})-(\ref{3-6}). 	To solve (\ref{3-5})-(\ref{3-6}), we take up the ansatz 
\begin{align}\label{eqa-3-1}
\begin{cases}
	V_2(t,z_2 ,p)&=z_2+A_2(t,p),\\
	g_2(t,z_2,p)&=z_2+a_2(t,p),
	\end{cases}
\end{align}
for some functions $A_2$ and $a_2$ to be determined. Plugging this into (\ref{3-5})-(\ref{3-6}) yields 
\begin{align}\label{3-7}
	\partial_t A_2+ \sup_{u}\Big\{ u (\theta(p)-r)
	+ \frac{1}{2}\beta^2(p)\partial_{pp}A_2-\frac{\gamma_2}{2}\sigma^2u^2 -\frac{\gamma_2}{2}\beta^2(p)(\partial_p a_2)^2\nonumber\\-\gamma_2u\sigma \beta(p)\partial_p a_2 \Big\}=0,
\end{align}
with the terminal condition $ A_2(T,p)=0$, as well as
\begin{align}\label{3-8}
	\partial_t a_2+u^*(\theta(p)-r)+\frac{1}{2}\beta^2(p)\partial_{pp}a_2=0,
\end{align}  
with the terminal condition $ a_2(T,p)=0$.

By solving for the maximizer of the supremum in (\ref{3-7}), we find that a (candidate)  equilibrium $u^*$ needs to satisfy
\begin{align}\label{a6-1}
	u^*(t,p)= \frac{\theta(p)-r}{\gamma_2\sigma^2}-\frac{\beta(p) \partial_p a_2(t,p)}{\sigma}.
\end{align}
Since $u^*(t):=(1-\frac{\lambda_2}{2})u_2^*(t)-\frac{\lambda_2}{2}u_1(\delta(t)) $, 	 we have a (candidate) equilibrium $u^*_2$ satisfying 
\begin{align}\label{eqa-6-0}
	u^*_2(t)=\frac{\theta(p)-r}{\sigma^2\gamma_2(1-\frac{\lambda_2}{2})}+u_1 \frac{{\lambda_2}}{2-\lambda_2}-\frac{\beta(p)\partial_p a_2(t,p)}{\sigma(1-\frac{\lambda_2}{2})}.
\end{align}
where $a_2$ satisfying the following Cauchy problem (cf.\ (\ref{3-8}))
\begin{equation}\label{eq3-10}
	\left\{
	\begin{aligned}
		\partial_t a_2 + \frac{(\theta(p)-r)^2}{\sigma^2 \gamma_2}-\frac{\beta(p)(\theta(p)-r)\partial_p a_2}{\sigma}  +\frac{1}{2}\beta(p)^2\partial_{pp}a_2&=0,\ \text{for}\ (t,p)\in[0,T)\times (0,1), \\
	a_2(T,p)&=0, \ \text{for}\ p\in (0,1). 	\end{aligned}
	\right.
\end{equation}
Then from Lemma 3.3 in \cite{huang2023partial}, we know that the Cauchy problem (\ref{eq3-10}) has a unique solution $a_2 \in C^{1,2}([0,T)\times (0,1))$ that is continuous up to the boundary $\{T\} \times(0,1)$. Moreover, the solution $a_2$ is bounded on $[0,T]\times (0,1)$.

To derive the (candidate) equilibrium value function $V_2$, from (\ref{3-7}), we have, for any $(t,p)\in [0,T)\times (0,1)$
%\begin{equation}\label{eq3-15}
%	\left\{
%	\begin{aligned}
%	\partial_t A_2+  [u_1(\theta(p)-r)]A_2(t)+[u_2^* (\theta(p)-r)]B_2(t)\nonumber\\
%	+ \frac{1}{2}\beta^2(p)\partial_{pp}C_2-\frac{\gamma_2}{2}\sigma^2u_1^2 a_2^2(t)-\frac{\gamma_2}{2}\beta^2(p)(\partial_p c_2)^2-\frac{\gamma_2}{2}\sigma^2 (u_2^*)^2 b_2^2(t)\nonumber\\-\gamma_2u_1 u_2^*\sigma^2 a_2(t)b_2(t)-\gamma_2u_1\sigma \beta(p)a_2(t)\partial_p c_2-\gamma_2u_2^*\sigma \beta(p)b_2(t)\partial_p c_2 =0\\
%	\partial_t C_2+\frac{1}{2}\beta(p)^2\partial_{pp}C_2+\mathcal{H}(t,p,\partial_p c_2)&=0,
%\\ C_2(T,p;u_1)&=0	\end{aligned}
%	\right.
%\end{equation}
%Further, we have 
\begin{align}\label{eqa-6-1}
\begin{cases}
		\partial_t A_2	+ \frac{1}{2}\beta^2(p)\partial_{pp}A_2+ \mathcal R (t,p,\partial_p a_2) =0,
\\ A_2(T,p)=0		
\end{cases}
\end{align}
	where 
	\begin{align*}
		\mathcal{R}(t,p,\partial_p a_2):=  (\theta(p)-r)\bigg[\frac{\theta(p)-r}{\sigma^2 \gamma_2}
		-\frac{\beta(p)\partial_p a_2 }{\sigma} \bigg]-\frac{\gamma_2}{2}\sigma^2 \bigg[\frac{\theta(p)-r}{\sigma^2\gamma_2}-\frac{\beta(p)\partial_p a_2}{\sigma}\bigg]^2\\
		- \frac{\gamma_2}{2}\beta(p)^2(\partial_p a_2)^2-\gamma_2 \sigma \beta(p) \partial_p a_2 \bigg[\frac{(\theta(p)-r)}{\sigma^2\gamma_2}-\frac{\beta(p)\partial_p a_2}{\sigma}\bigg].
	\end{align*}
	
%	\begin{align*}
%		\mathcal{R}(t,p,\partial_p c_2):= (\theta(p)-r)\bigg[(1-\frac{\lambda_2}{2})u^*_2-\frac{\lambda_2}{2}u_1\bigg]-\frac{\gamma_2}{2}\sigma^2 \bigg[(1-\frac{\lambda_2}{2})u^*_2-\frac{\lambda_2}{2}u_1\bigg]^2\\
%		- \frac{\gamma_2}{2}\beta(p)^2(\partial_p c_2)^2-\gamma_2 \sigma \beta(p) \partial_p c_2 \bigg[(1-\frac{\lambda_2}{2})u^*_2-\frac{\lambda_2}{2}u_1\bigg],
%	\end{align*}
%	and 
%	\begin{align*}
%		{(1-\frac{\lambda_2}{2})u^*_2-\frac{\lambda_2}{2}u_1}&=\frac{\theta(p)-r}{\sigma^2\gamma_2}-\frac{\beta(p)\partial_p c_2}{\sigma}.
%		\end{align*}
	
%	Further, we have 
		Then from Corollary 3.1 in \cite{huang2023partial}, we know that the Cauchy problem (\ref{eqa-6-1}) has a unique solution $A_2 \in C^{1,2}([0,T)\times (0,1))$ that is continuous up to the boundary $\{T\} \times(0,1)$.
	
	Therefore, the extended HJB equations (\ref{3-5})-(\ref{3-6}) for the follower has a solution ($V_2,g_2$) of the form (\ref{eqa-3-1}), where $a_2$ is the unique classical solution to (\ref{eq3-10}) and $A_2$ is the unique classical solution to (\ref{eqa-6-1}). 
	%Further, we have 
	%\begin{align*}
	%	\partial_t C_2+\frac{1}{2}\beta(p)^2 \partial_{pp}C_2+ e^{r(T-t)}(\theta(p)-r)[(1-\frac{\lambda}{2})\pi_2-\frac{\lambda }{2}\pi_1]\\-\frac{\gamma_2\sigma^2 }{2} e^{2r(T-t)}[(1-\frac{\lambda}{2})\pi_2-\frac{\lambda }{2}\pi_1]^2 -\gamma_2 \sigma \beta(p) \partial_p c_2[(1-\frac{\lambda}{2})\pi_2-\frac{\lambda }{2}\pi_1] e^{r(T-t)}-\frac{\gamma_2}{2}\beta(p)^2(\partial_p c_2)^2=0
	%		\end{align*}
	%with 
	%\begin{align*}
	%	(1-\frac{\lambda}{2})\pi_2-\frac{\lambda }{2}\pi_1 = \pi_2\frac{2-2\lambda}{2-\lambda}-\frac{\mu-r}{\sigma^2 \gamma_1}\frac{\lambda}{2-\lambda}e^{-r(T-t)}.
	%\end{align*}

\textbf{Step 2:} As shown above, the (candidate) equilibrium value function \(V_2(t,z_2,p)\) and the auxiliary value function \(g_2(t,z_2,p)\) are deterministic. We now prove that  $V_2(t,z_2,p) = z_2 + A_2(t,p),  $
$g_2(t,z_2,p) = z_2 + a_2(t,p),
$ 
are indeed the desired functions, that is,  
\[
V_2(t,z_2,p) = J_2^{\mathcal D}(t,z_2,p;u^*), 
\qquad 
g_2(t,z_2,p) = \mathbb{E}[Z_2^{u^*}(T)].
\]

First, by the construction of $(V_2,g_2,u^*)$ in Step 1, we see that (i) $V_2(t,z_2,p)$ and $g_2(t,z_2,p)$ belong to $C^{1,\infty,2}([0,\infty) \times \mathbb R \times (0,1))$ and their first derivatives in $z_2$ and $p$ are all bounded; (ii) $u^* $ in (\ref{a6-1}) is also bounded. 

Then applying It\^o's formula to $g_2(t,Z^{u^*}_2(t),P(t))$, we have 
\begin{align*}
	dg_2(t,Z^{u^*}_2(t),P(t))= [\partial_t g_2+ \mathcal A_2^{u^*} g_2]dt+\partial_{z_2}g_2 \sigma u^* d\widehat{W}+\partial_p g_2 [\beta(p)d\widehat{W}].
\end{align*}
	Since $g_2$ satisfies the extended HJB equation (cf.\ (\ref{3-6})), the $dt$ term on the right-hand side of the above equation is identical to zero. Moreover, from the boundedness on the coefficients and $g_2$, it follows that $g_2(t,Z_2^{u^*}(t),P(t))$ is a martingale. So, by the terminal condition of $g_2(T,z_2,p)=z_2$, it is the expectation function of $Z_2^{u^*}$, i.e.,
	\begin{align*}
		g_2(t,z_2,p)=\mathbb{E}[Z_2^{u^*}(T)].
			\end{align*}
	Combining with (\ref{3-5}) and (\ref{3-6}), we have 
	\begin{align*}
		\partial_t V_2+ \mathcal A_2^{u^*} V_2-\frac{\gamma_2}{2}(\partial_t +\mathcal A_2^{u^*} )(g_2)^2=0.
	\end{align*}
	Using It\^o's formula and the boundary condition of $V_2(T,z_2,p)=z_2$, we have 
	\begin{align*}
		V_2(t,Z_2(t),P(t))&=\mathbb E[Z_2^{u^*}(T)]-\frac{\gamma_2}{2}\mathbb E\bigg[\int^T_t(\partial_s +\mathcal A_2^{u^*})(g_2)^2ds\bigg]\\
		&= \mathbb E[Z_2^{u^*}(T)]-\frac{\gamma_2}{2}\bigg((g_2)^2(T,Z_2^{u^*}(T),P(T))-(g_2)^2(t,Z_2(t),P(t)) \bigg)\\
		&= \mathbb E[Z_2^{u^*}(T)]-\frac{\gamma_2}{2}\mathrm{Var}[Z_2^{u^*}(T)],
	\end{align*}
	where the last equality is obtained due to the fact that $g_2$ is the expectation of the terminal wealth. This finishes the second step. 
	
	\textbf{Step 3:} Now we show that $u^*_2$ in (\ref{eqa-6-0}) is indeed an equilibrium policy. First, we need a small temporary definition. For the candidate equilibrium strategy $u_2^*$, we define 
	\begin{align*}
			f^2(t,z_2,p):= \mathbb E_{t,z_2}[F_2(Z_2^{u_2^*}(T))] \end{align*}	
with $F_2(x):=x-\frac{\gamma_2}{2}x^2$. For any $h>0$ and any admissible control law $u_2 \in \mathcal A_2$, we now construct the control law $u_2^{h,v_2}$ as in Definition \ref{deffollower}.

Now, for any $h>0 $, applying It\^o's Lemma to $f^2(r,Z_2^{v_2}(r),P(r)), r\in [t,t+h]$, taking expectation, and recalling Fubini's theorem, we have 
\begin{align}\label{a-9-1}
	\mathbb{E}_{t,z_2,p}[f^2(t+h,Z_2^{v_2}(t+h),P(t+h))]-f^2(t,z_2,p)\nonumber\\= \int^{t+h}_{t}\mathbb E_{t,z_2,p}[(\partial_t+ \mathcal  A_2^{v_2})f^2(r,Z_2^{v_2}(r),P(r))]dr,
\end{align}
where the expectation of the local martingale term is zero because the bounded coefficients of $f^2_{z_2}, f^2_p$. Because $Z_2^{v_2}(r)$ is continuous in $r$ with $Z_2^{v_2}(t)=z_2$,  $(\partial_t+ \mathcal  A_2^{v_2})f^2(r,Z_2^{v_2}(r),P(r))$ converges to $(\partial_t+ \mathcal  A_2^{v_2})f^2(t,z_2,p)$ as $r\downarrow t$. Then by the dominated convergence theorem, we have 
\begin{align*}
	\lim_{s\downarrow t} \mathbb E_{t,z_2,p}[(\partial_t+ \mathcal  A_2^{v_2})f^2(s,Z_2^{v_2}(s),P(s)) ]= (\partial_t+ \mathcal  A_2^{v_2})f^2(t,z_2,p).
	\end{align*}
Combining above with (\ref{a-9-1}),  we obtain 
\begin{align*}
	\mathbb{E}_{t,z_2,p}[f^2(t+h,Z_2^{v_2}(t+h),P(t+h))]-f^2(t,z_2,p)= h (\partial_t+ \mathcal  A_2^{v_2})f^2(t,z_2,p)+o(h).
\end{align*}
   Consequently, 
   \begin{align}\label{a-10}
   	\mathbb  E&_{t,z_2,p}[F_2(Z_2^{u_2^{h,v_2}}(T))] -\mathbb E_{t,z_2,p}[F_2(Z_2^{u_2^*}(T))]   \nonumber\\
   	&= \mathbb E_{t,z_2,p}[f^2(t+h,Z_2^{u_2^{h,v_2}}(t+h),P(t+h))]-f^2(t,z_2,p)\nonumber\\
   	&= \mathbb E_{t,z_2,p}[f^2(t+h,Z_2^{v_2}(t+h),P(t+h))]-f^2(t,z_2,p)=    h (\partial_t+ \mathcal  A_2^{v_2})f^2(t,z_2,p)+o(h),
   		 \end{align}
   where the first equality is the case because $u_2^{h,v_2}(s)=u_2^*(s)$ for $s\in[t+h,T]$ and the second is the case because  $u_2^{h,v_2}(s)=v_2(s)$ for $s\in[t,t+h)$. 
   Similarly, we can show that 
   \begin{align*}
   		\mathbb  E&_{t,z_2,p}[{g}_2(t+h,Z_2^{u_2^{h,v_2}}(t+h)),P(t+h)] -{g}_2(t,z_1,p)   =   h (\partial_t+ \mathcal  A_2^{v_2}){g}_2(t,z_2,p)+o(h),   
   		\end{align*}
     which yields
     \begin{align}\label{a-11}
     	[\mathbb E&_{t,z_2,p}(Z_2^{u_2^{h,v_2}}(T))]^2- [\mathbb E_{t,z_2,p}(Z_2^{u_2^*}(T))]^2 \nonumber\\
     	&=\bigg( \mathbb  E_{t,z_2,p}[{g}_2(t+h,Z_2^{u_2^{h,v_2}}(t+h),P(t+h))]\bigg)^2 -[{g}_2(t,z_2,p)]^2 \nonumber\\
     	&= \bigg({g}_2(t,z_2,p)+h (\partial_t+ \mathcal  A_2^{v_2}){g}_2(t,z_2,p)+o(h)\bigg)^2 -[{g}_2(t,z_2,p)]^2\nonumber \\ &= 
     	2h {g}_2(t,z_2,p) (\partial_t+ \mathcal  A_2^{v_2}){g}_2(t,z_2,p) +o(h).
     	  \end{align}
     %	 On the other hand, straightforward calculations yields 
     %	  \begin{align}\label{eqa-20}
     %	  	\lambda_0& \int^T_t H(\Pi^{h,\widetilde{\pi}}_s)ds -\lambda_0 \int^T_t H(\Pi^{*}_s)ds\nonumber\\
     %	  	&=\lambda_0 \int^{t+h}_t \bigg(H(\widetilde{\pi}_s)- H(\Pi^*_s)\bigg)ds= \lambda_0 h(H(\widetilde{\pi}_t)-H(\Pi^*_t))+o(h).  \end{align}
   
   Combining (\ref{a-10}) and (\ref{a-11}), we derive 
   \begin{align}\label{a-12}
   {J}_2^{\mathcal D}(t,z_2,p;u_2^{h,v_2},u_1)-{J}_2^{\mathcal D}(t,z_2,p;u_2^*,u_1)= h \Theta_2 +o(h),   \end{align}
   where $\Theta_2:= (\partial_t+ \mathcal  A_2^{v_2})f^2(t,z_2,p)+\gamma_2 g_2(t,z_2,p)(\partial_t+ \mathcal  A_2^{v_2}){g}_2(t,z_2,p) $.
   
   From Step 2, we verify that ${V}_2(t,z_2,p)=f^2(t,z_2,p)+\frac{\gamma_2}{2}({g}_2)^2(t,z_2,p). $ Moreover, since $V_2$ satisfies the extended HJB equation (\ref{3-5}), we have 
   \begin{align}\label{a-13}
   \partial_t {V}_2+ \mathcal{A}_2^{v_2}{V}_2+\gamma_2 {g}_2 \mathcal A_2^{v_2} {g}_2-\frac{\gamma_2}{2}\mathcal A_2^{v_2} ({g}_2)^2	\leq 0.
      \end{align}
   Therefore, combining with (\ref{a-13}), we have 
   \begin{align}\label{a-14}
   	\Theta_2 &= (\partial_t+ \mathcal  A_2^{v_2})\bigg[{V}_2-\frac{\gamma_2}{2}({g}_2)^2\bigg]+\gamma_2 {g}_2\mathcal A_2^{v_2}{g}_2 \leq 0.   	\end{align}
      
   Finally, from (\ref{a-12}) and (\ref{a-14}), we conclude that for any $(t,z_2,p)\in[0,T] \times \mathbb R\times (0,1)$, and $v_2 \in \mathcal A_2$,
   \begin{align*}
   		 \essinf_{h \downarrow 0 } \frac{J_2^{\mathcal D}(t,\boldsymbol{x},p;u^*_{2},u_1 )- J_2^{\mathcal D}(t,\boldsymbol{x},p;u_{2}^{h,v_2}, u_1)}{h}\geq 0,   	   \end{align*}
   which indicates that $u_2^*$ is an equilibrium policy.

\end{proof}

%\subsection{Proof of Theorem \ref{verification2}}\label{proofverification2}

%\begin{proof}
%	The proof consists of two steps:
	
%	\textbf{Step 1:} We start by showing that $V$ is the value function corresponding to $\pi^*$, that is, $V(t,\boldsymbol{x},p)=J(t,\boldsymbol{x},p;\pi^*)$, and $g$ has the interpretation ().
	
%	To show that $W^i(t,\boldsymbol{x},p)=J(t,\boldsymbol{x},\pi^*)$, we use the () to obtain
%	\begin{align*}
%	\mathcal{A}_1^{\pi_1,\pi^*_2}W^i(t,\boldsymbol{x},p)-\mathcal{A}_1^{\pi_1,\pi^*_2} \frac{\gamma_i}{2} h^i(t,\boldsymbol{x},p)^2+ \gamma_i h^i(t,\boldsymbol{x},p) \mathcal{A}_1^{\pi_1,\pi^*_2}h^i(t,\boldsymbol{x},p)  	=0.	\end{align*}
%	Since 

%	\textbf{Step 2:} We prove that $\Pi^*$ is indeed an equilibrium control law. From (\ref{jz1}), we have 
%	\begin{align*}
%		\widetilde{J}_1(t,z_1,p;\Pi^{h,\widetilde{\pi}},u_2)&= \mathbb{E}[z^{h,\widetilde{\pi}}_1(T) +\lambda_0 \int^T_0 H(\Pi_t^{h,\widetilde{\pi}})dt ]-\frac{\gamma_1}{2} {\mathrm{Var}}[z^{h,\widetilde{\pi}}_1(T)]\\
%		&=\mathbb E\bigg[z_1^{h,\widetilde{\pi}}(T)-\frac{\gamma_1}{2}(z_1^{h,\widetilde{\pi}}(T))^2+\lambda_0 \int^T_0 H(\Pi_t^{h,\widetilde{\pi}})dt \bigg]	 +\frac{\gamma_1}{2} \bigg(\mathbb E[z_1^{h,\widetilde{\pi}}(T)] \bigg)^2. 
%		\end{align*} 
%		Let $F(x):=x-\frac{\gamma_1}{2}x^2 .$ Define $f$ and $g$ by 
%		\begin{align*}
%			f(t,z):= \mathbb E_{t,z}[F(z_1^{\Pi}(T))],\quad g(t,z):= \mathbb E_{t,z}[z_1^{\Pi}(T)]	.	\end{align*}
%			For any admissible control law $\pi$, we define $f^{\pi}$ 

%\end{proof}

\subsection{Proof of Theorem \ref{leader}}\label{proofleader}
\begin{proof}
To find such an intra-personal equilibrium $\Pi^*$, we first introduce the extended HJB equation as in \cite{bjork2017time} (see also \cite{dai2023learning}) for the leader. The leader anticipates the follower's optimal response strategy \(u_2^*\) and seeks an equalibrium strategy \(\Pi^*\) that satisfies condition (\ref{eq4-5-2}).
 The same derivation in \cite{bjork2017time}, under the dynamics of $(Z_1,P)$, then yields 
\begin{align}\label{3-1}
	\partial_t \widetilde{V}_1+ \sup_{\Pi} \Big\{ [(\chi \widetilde{b}-\frac{\lambda_1}{2}\Gamma)  (\theta(p) -r)]\partial_{z_1} \widetilde{V}_1 +\frac{1}{2}\sigma^2[(\chi \widetilde{b}-\frac{\lambda_1}{2}\Gamma)^2+\chi^2 \widetilde{\sigma}^2  ] \partial_{z_1z_1} \widetilde{V}_1 \nonumber \\+ \frac{1}{2}\beta^2(p)\partial_{pp}\widetilde{V}_1 + \sigma(\chi \widetilde{b}-\frac{\lambda_1}{2}\Gamma)\beta(p) \partial_{z_1 p}\widetilde{V}_1-\frac{\gamma_1}{2}\sigma^2 [(\chi \widetilde{b}-\frac{\lambda_1}{2}\Gamma)^2+\chi^2\widetilde{\sigma}^2](\partial_{z_1} (\widetilde{g}_1)^2   \nonumber \\
	 -\frac{\gamma_1}{2}\beta^2(p)(\partial_p \widetilde{g}_1)^2-\gamma_1 \sigma (\chi \widetilde{b}-\frac{\lambda_1}{2}\Gamma) \beta(p)\partial_p \widetilde{g}_1\partial_{z_1}\widetilde{g}_1 +\lambda_0 H(\Pi)\Big \} =0,
\end{align}
with the terminal condition $\widetilde{V}_1(T,z_1 ,p )=z_1 $, where the function $\widetilde{g}_1$ satisfies 
\begin{align}\label{3-2}
	\partial_t \widetilde{g}_1+  [(\chi \widetilde{b}-\frac{\lambda_1}{2}\Gamma)  (\theta(p) -r)]\partial_{z_1} \widetilde{g}_1 +\frac{1}{2}\sigma^2[(\chi \widetilde{b}-\frac{\lambda_1}{2}\Gamma)^2+\chi^2 \widetilde{\sigma}^2  ] \partial_{z_1z_1} \widetilde{g}_1  \nonumber \\+ \frac{1}{2}\beta^2(p)\partial_{pp}\widetilde{g}_1 + \sigma(\chi \widetilde{b}-\frac{\lambda_1}{2}\Gamma)\beta(p) \partial_{z_1 p}\widetilde{g}_1=0,
			\end{align}  
with the terminal condition $\widetilde{g}_1(T,z_1 ,p)=z_1$.

\textbf{Step 1:} Solving the extended HJB equations (\ref{3-1})-(\ref{3-2}). To solve (\ref{3-1})-(\ref{3-2}), we take up the ansatz 
\begin{align}\label{eqa-11-1}
	\widetilde{V}_1(t,z_1 ,p)=z_1+A_1(t,p), \quad
	\widetilde{g}_1(t,z_1,p)=z_1+a_1(t,p),
\end{align}
for some functions $A_1$ and $a_1$ to be determined. Plugging this into (\ref{3-1})-(\ref{3-2}) yields 
\begin{align}\label{3-3}
	\partial_t A_1+ \sup_{\Pi} \Big\{ [(\chi \widetilde{b}-\frac{\lambda_1}{2}\Gamma)  (\theta(p) -r)] + \frac{1}{2}\beta^2(p)\partial_{pp}A_1 -\frac{\gamma_1}{2}\sigma^2 [(\chi \widetilde{b}-\frac{\lambda_1}{2}\Gamma)^2+\chi^2\widetilde{\sigma}^2]   \nonumber \\
	 -\frac{\gamma_1}{2}\beta^2(p)(\partial_p a_1)^2-\gamma_1 \sigma (\chi \widetilde{b}-\frac{\lambda_1}{2}\Gamma) \beta(p)\partial_p a_1+\lambda_0 H(\Pi)\Big \} =0,
\end{align}
with the terminal condition $A_1(T,p)=0$, as well as
\begin{align}\label{3-3-1}
	\partial_t a_1+  [(\chi \widetilde{b}-\frac{\lambda_1}{2}\Gamma)  (\theta(p) -r)] + \frac{1}{2}\beta^2(p)\partial_{pp}a_1 =0,\end{align}  
with the terminal condition $a_1(T,p)=0$.

By solving for the maximizer of the supremum in (\ref{3-3}), we find that a (candidate) equilibrium $\Pi^{*}$ needs to satisfy
\begin{align}\label{eq3-8}
	\Pi^{*}_t = \argmax_{\Pi \in \mathcal{P}} \bigg\{ \chi \widetilde{b}  (\theta(p) -r)-\frac{\gamma_1}{2}\sigma^2 [(\chi \widetilde{b}-\frac{\lambda_1}{2}\Gamma)^2+\chi^2\widetilde{\sigma}^2]  
	-\gamma_1 \sigma \chi \widetilde{b} \beta(p)\partial_p a_1+\lambda_0 H(\Pi) \bigg\}. 
\end{align}
Note that, on the right hand side of (\ref{eq3-8}), except the entropy term, other terms only depend on $\Pi$ through the mean and variance  $\widetilde{b}$ and $\widetilde{\sigma}^2$. We know that, among all the probability distributions over the real numbers with a given mean and variance, the normal distribution is the one with the maximal entropy (cf.\ \cite{cover2006elements}). Hence, $\Pi^{*}$ should be a normal distribution. Choosing its mean and variance to maximize the right hand side of (\ref{eq3-8}), we have 
\begin{align*}
	\Pi^{*}_t \sim \mathcal{N}\bigg(\frac{\theta(p)-r}{\sigma^2}l -\frac{\beta(p) }{\chi \sigma}\Big({\partial_p a_1}+{(1-\chi)\partial_p a_2} \Big), \frac{\lambda_0}{\gamma_1 \sigma^2 \chi^2 }  \bigg),
\end{align*}
where $a_1$ satisfies the following Cauchy problem (cf.\ (\ref{3-3-1}))
\begin{align}\label{eqa-14-1}
	\begin{cases}
			\partial_t a_1 + \frac{(\theta(p)-r)^2}{\sigma^2 \gamma_1}-\frac{\beta(p)(\theta(p)-r)\partial_p a_1}{\sigma}  +\frac{1}{2}\beta(p)^2\partial_{pp}a_1=0,\ &\text{for}\ (t,p)\in[0,T)\times (0,1), \\
	a_1(T,p)=0, \ &\text{for}\ p\in (0,1).	\end{cases}
\end{align}
Observing that the equation (\ref{eqa-14-1}) coincides with (\ref{eq3-10}), except that the coefficient of the second term is \(\gamma_1\) instead of \(\gamma_2\). Therefore, again, from Lemma 3.3 in \cite{huang2023partial}, we know that the Cauchy problem (\ref{eqa-14-1}) has a unique solution $a_1 \in C^{1,2}([0,T)\times (0,1))$ that is continuous up to the boundary $\{T\} \times(0,1)$. Moreover, the solution $a_1$ is bounded on $[0,T]\times (0,1)$.   

%In addition, for (\ref{3-3})-(\ref{3-3-1}) to hold for all $x_1,x_2$, it is necessary that the coefficients of $x_1$ and the coefficients of $x_2$ sum up to $0$, respectively. This leads to the ordinary differential equations
%\begin{align*}
%	A'_1(t)+rA_1(t)&=0,\quad B'_1(t)+rB_1(t)=0,\\
%	a'_1(t)+ra_1(t)&=0, \quad
%	b'_1(t)+rb_1(t)=0.
%\end{align*}
%Solving the ODEs gives 
%\begin{align*}
%	A_1(t)=(1-\frac{\lambda_1}{2})e^{r(T-t)}, \quad B_1(t)=-\frac{\lambda_1}{2}e^{r(T-t)},
%\end{align*}
%and 
%\begin{align*}
%	a_1(t)=(1-\frac{\lambda_1}{2})e^{r(T-t)}, \quad b_1(t)=-\frac{\lambda_1}{2}e^{r(T-t)}.
%\end{align*}
%Therefore, from (\ref{eq3-8}) we have 
%\begin{align*}
%	\pi^*_1 \sim \mathcal{N}\bigg(\frac{\mu-r}{\sigma^2 \gamma_1}\frac{2}{(2-\lambda_1)}e^{-r(T-t)}+\pi^*_2 \frac{\lambda_1}{2-\lambda_1}, \frac{\lambda}{\gamma_1 \sigma^2 a_1^2 }  \bigg).
%\end{align*}
%Let $\Pi=\frac{\mu-r}{\sigma^2 \gamma_1}\frac{2}{(2-\lambda_1)}e^{-r(T-t)}+\pi^*_2 \frac{\lambda_1}{2-\lambda_1}$ and $\Sigma=\frac{\lambda}{\gamma_1 \sigma^2 a_1^2 } $. 
Moreover, from (\ref{3-3}) we have that for any $(t,p)\in [0,T)\times(0,1)$, $A_1(t,p)$ satisfies the following equation
%\begin{align}\label{eqa-7}
%	\partial_t C_1(t,p)+ \frac{(\mu-r)^2 (1-\frac{\lambda_1^M}{2})^2}{2\sigma^2\gamma_1 (1-\frac{\lambda_1^V}{2})^2}+e^{r(T-t)}(\mu-r)\pi_2^*\frac{(\lambda_1^V)^2(\frac{\lambda_1^M}{2}-1)+2\lambda_1^V-2\lambda_1^M-2\lambda_1^M\lambda_1^V}{4-2\lambda_1^V}\nonumber \\+ (1-\frac{\lambda_1^V}{2})^2\frac{\gamma_1}{2}\sigma^2 e^{2r(T-t)}(\pi_2^*)^2=0.
%\end{align}
\begin{align*}
	\partial_t A_1+ \sup_{\Pi} \Big\{ [(\chi \widetilde{b}-\frac{\lambda_1}{2}\Gamma)  (\theta(p) -r)] + \frac{1}{2}\beta^2(p)\partial_{pp}A_1 -\frac{\gamma_1}{2}\sigma^2 [(\chi \widetilde{b}-\frac{\lambda_1}{2}\Gamma)^2+\chi^2\widetilde{\sigma}^2]   \nonumber \\
	 -\frac{\gamma_1}{2}\beta^2(p)(\partial_p a_1)^2-\gamma_1 \sigma (\chi \widetilde{b}-\frac{\lambda_1}{2}\Gamma) \beta(p)\partial_p a_1+\lambda_0 H(\Pi)\Big \}.
\end{align*}
In particular, we observe that 
\begin{align*}
	\chi \widetilde{b}-\frac{\lambda_1}{2}\Gamma = \frac{\theta(p)-r}{\sigma^2 \gamma_1}-\frac{\beta(p)\partial_p a_1}{\sigma}.
	\end{align*}
Therefore, $A_1$ satisfies the following equation
\begin{align}\label{eqa-15-1}
	\partial_t A_1+ \bigg[\frac{\theta(p)-r}{\sigma^2 \gamma_1}-\frac{\beta(p)\partial_p a_1}{\sigma}\bigg] (\theta(p) -r) + \frac{1}{2}\beta^2(p)\partial_{pp}A_1 -\frac{\gamma_1}{2}\sigma^2 \bigg[\frac{\theta(p)-r}{\sigma^2 \gamma_1}-\frac{\beta(p)\partial_p a_1}{\sigma}\bigg]^2   \nonumber \\
	 -\frac{\gamma_1}{2}\beta^2(p)(\partial_p a_1)^2-\gamma_1 \sigma \bigg[\frac{\theta(p)-r}{\sigma^2 \gamma_1}-\frac{\beta(p)\partial_p a_1}{\sigma}\bigg] \beta(p)\partial_p a_1+\frac{\lambda_0}{2}\log\bigg(\frac{2\pi\lambda_0}{\gamma_1 \chi^2}\bigg).
	 \end{align}
Noticing that (\ref{eqa-15-1}) coincides with the Cauchy problem (\ref{eqa-6-1}), except for the presence of an additional constant term \(\tfrac{\lambda_0}{2}\log\!\left(\tfrac{2\pi\lambda_0}{\gamma_1 \chi^2}\right)\). Therefore, from Corollary 3.1 in \cite{huang2023partial}, we know that the Cauchy problem (\ref{eqa-15-1}) has a unique solution $A_1 \in C^{1,2}([0,T)\times (0,1))$ that is continuous up to the boundary $\{T\} \times(0,1)$.
	
	Therefore, the extended HJB equations (\ref{3-1})-(\ref{3-2}) for the leader has a solution ($\widetilde{ V}_1,\widetilde{g}_1$) of the form (\ref{eqa-11-1}), where $a_1$ is the unique classical solution to (\ref{eqa-14-1}) and $A_1$ is the unique classical solution to (\ref{eqa-15-1}). 

   \textbf{Step 2:} 	We now prove that $\widetilde{ V}_1(t,z_1,p)=z_1+A_1(t,p)$ and $\widetilde{ g}_1(t,z_1,p)=z_1+a_1(t,p)$ are the desired functions, i.e., $\widetilde{ V}_1(t,z_1,p)=\widetilde{ J}_1(t,z_1,p;\Pi^*, u^*_2)$ and $\widetilde{ g}_1(t,z_1,p)=\mathbb{E}[Z_1^{\Pi^*}(T)]$.

First, by the construction of $(\widetilde{ V}_1,\widetilde{g}_1,\Pi^*)$ in Step 1, we see that (i) $\widetilde{ V}_1(t,z_1,p)$ and $\widetilde{ g}_1(t,z_1,p)$ belong to $C^{1,\infty,2}([0,\infty) \times \mathbb R \times (0,1))$ and their first derivatives in $z_2$ and $p$ are all bounded; (ii) the mean and the variance of $\Pi^* $ are also bounded. 

Then applying It\^o's formula to $\widetilde{ g}_1(t,Z^{\Pi^*}_1(t),P(t))$, we have 
\begin{align*}
	d\widetilde{ g}_1(t,Z^{\Pi^*}_1(t),P(t))= [\partial_t \widetilde{ g}_1+ \mathcal A_1^{\Pi^*} \widetilde{ g}_1]dt+\partial_{z_1}\widetilde{ g}_1[ \sigma (\chi \widetilde{b}_t-\frac{\lambda_1}{2}\Gamma(t,P_t))	d\widehat{W}(t)+ \sigma \chi \widetilde{\sigma}_t d\overline{W}(t)]\\+\partial_p \widetilde{ g}_1 [\beta(p)d\widehat{W}].
\end{align*}
	Since $\widetilde{ g}_1$ satisfies the extended HJB equation (cf.\ (\ref{3-2})), the $dt$ term on the right-hand side of the above equation is identical to zero. Moreover, from the boundedness on the coefficients and $\widetilde{ g}_1$, it follows that $\widetilde{ g}_1(t,Z_1^{\Pi^*}(t),P(t))$ is a martingale. So, by the terminal condition of $\widetilde{ g}_1(T,z_1,p)=z_1$, it is the expectation function of $\Pi^*$, i.e.,
	\begin{align*}
		\widetilde{ g}_1(t,z_1,p)=\mathbb{E}[Z_1^{\Pi^*}(T)].
			\end{align*}
	Combining with (\ref{3-1}) and (\ref{3-2}), we have 
	\begin{align*}
		\partial_t  \widetilde{ V}_1+ \mathcal A_1^{\Pi^*} \widetilde{ V}_1-\frac{\gamma_1}{2}(\partial_t +\mathcal A_1^{\Pi^*} )(\widetilde{ g}_1)^2+ \lambda_0 H(\Pi^*)
		=0.
	\end{align*}
	Using It\^o's formula and the boundary condition of $\widetilde{ V}_1(T,z_1,p)=z_1$, we have 
	\begin{align*}
	\widetilde{	V}_1(t,Z_1(t),P(t))=&\ \mathbb E[Z_1^{\Pi^*}(T)+\lambda_0 \int^T_tH(\Pi^*_s)ds ]-\frac{\gamma_1}{2}\mathbb E\bigg[\int^T_t(\partial_s +\mathcal A_1^{\Pi^*})(\widetilde{ g}_1)^2ds\bigg]\\
		=&\ \mathbb E[Z_1^{\Pi^*}(T)+\lambda_0 \int^T_tH(\Pi^*_s)ds]-\frac{\gamma_1}{2}\bigg((\widetilde{ g}_1)^2(T,Z_1^{\Pi^*}(T),P(T))\\ & -(\widetilde{ g}_1)^2(t,Z_1(t),P(t)) \bigg)\\
		=&\ \mathbb E[Z_1^{\Pi^*}(T)+\lambda_0 \int^T_tH(\Pi^*_s)ds]-\frac{\gamma_1}{2}\mathrm{Var}[Z_1^{\Pi^*}(T)],
	\end{align*}
	where the last equality is obtained due to the fact that $\widetilde{ g}_1$ is the expectation of the terminal wealth. This finishes the second step. 
	
	\textbf{Step 3:} Now we show that $\Pi^*$ is indeed an equilibrium policy. At time $t$, given any $h\in \mathbb R_+$ and $\widetilde{\pi} \in \mathcal P(\mathbb R)$, consider the perturbation policy $\Pi^{h,\widetilde{\pi}}$ as defined in Definition \ref{def2.5}.
	
	First, we need a small temporary definition. For the candidate equilibrium strategy $\Pi^*$, we define 
	\begin{align*}
			f^1(t,z_1,p):= \mathbb E_{t,z_1,p}[F_1(Z_1^{\Pi^*}(T))]	\end{align*}	
with $F_1(x):=x-\frac{\gamma_1}{2}x^2$. 

Now, for any $h>0 $, applying It\^o's Lemma to $f^1(r,Z_1^{\widetilde{\pi}}(r),P(r)), r\in [t,t+h]$, taking expectation, and recalling Fubini's theorem, we have 
\begin{align}\label{eqa-17-1}
	\mathbb{E}_{t,z_1,p}[f^1(t+h,Z_1^{\widetilde{\pi}}(t+h),P(t+h))]-f^1(t,z_1,p)\nonumber\\= \int^{t+h}_{t}\mathbb E_{t,z_1,p}[(\partial_t+ \mathcal  A_1^{\widetilde{\pi}})f^1(r,Z_1^{\widetilde{\pi}}(r),P(r))]dr,
\end{align}
where the expectation of the local martingale term is zero because the bounded coefficients of $f^1_{z_1}, f^1_p$. Because $Z_1^{\widetilde{\pi}}(r)$ is continuous in $r$ with $Z_1^{\widetilde{\pi}}(t)=z_1$,  $(\partial_t+ \mathcal  A_1^{\widetilde{\pi}})f^1(r,Z_1^{\widetilde{\pi}}(r),P(r))$ converges to $(\partial_t+ \mathcal  A_1^{\widetilde{\pi}})f^1(t,z_1,p)$ as $r\downarrow t$. Then by the dominated convergence theorem, we have 
\begin{align*}
	\lim_{s\downarrow t} \mathbb E_{t,z_1,p}[(\partial_t+ \mathcal  A_1^{\widetilde{\pi}})f^1(s,Z_1^{\widetilde{\pi}}(s),P(s)) ]= (\partial_t+ \mathcal  A_1^{\widetilde{\pi}})f^1(t,z_1,p).
	\end{align*}
Combining above with (\ref{eqa-17-1}),  we obtain 
\begin{align*}
	\mathbb{E}_{t,z_1,p}[f^1(t+h,Z_1^{\widetilde{\pi}}(t+h),P(t+h))]-f^1(t,z_1,p)= h (\partial_t+ \mathcal  A_1^{\widetilde{\pi}})f^1(t,z_1,p)+o(h).
\end{align*}
   Consequently, 
   \begin{align}\label{eqa-18}
   	\mathbb  E&_{t,z_1,p}[F_1(Z_1^{\Pi^{h,\widetilde{\pi}}}(T))] -\mathbb E_{t,z_1,p}[F_1(Z_1^{\Pi^{*}}(T))]   \nonumber\\
   	&= \mathbb E_{t,z_1,p}[f^1(t+h,Z_1^{\Pi^{h,\widetilde{\pi}}}(t+h),P(t+h))]-f^1(t,z_1,p)\nonumber\\
   	&= \mathbb E_{t,z_1,p}[f^1(t+h,Z_1^{\widetilde{\pi}}(t+h),P(t+h))]-f^1(t,z_1,p)=    h (\partial_t+ \mathcal  A_1^{\widetilde{\pi}})f^1(t,z_1,p)+o(h),
   		 \end{align}
   where the first equality is the case because $\Pi^{h,\widetilde{\pi}}_{u}=\Pi^*_u$ for $u\in[t+h,T]$ and the second is the case because  $\Pi^{h,\widetilde{\pi}}_{u}=\widetilde{\pi}_u$ for $u\in[t,t+h)$. 
   Similarly, we can show that 
   \begin{align*}
   		\mathbb  E&_{t,z_1,p}[\widetilde{g}_1(t+h,Z_1^{\Pi^{h,\widetilde{\pi}}}(t+h)),P(t+h)] -\widetilde{g}_1(t,z_1,p)   =   h (\partial_t+ \mathcal  A_1^{\widetilde{\pi}})\widetilde{g}_1(t,z_1,p)+o(h),   
   		\end{align*}
     which yields
     \begin{align}\label{eqa-19}
     	[\mathbb E&_{t,z_1,p}(Z_1^{\Pi^{h,\widetilde{\pi}}}(T))]^2- [\mathbb E_{t,z_1,p}(Z_1^{\Pi^{*}}(T))]^2 \nonumber\\
     	&=\bigg( \mathbb  E_{t,z_1,p}[\widetilde{g}_1(t+h,Z_1^{\Pi^{h,\widetilde{\pi}}}(t+h),P(t+h))]\bigg)^2 -[\widetilde{g}_1(t,z_1,p)]^2 \nonumber\\
     	&= \bigg(\widetilde{g}_1(t,z_1,p)+h (\partial_t+ \mathcal  A_1^{\widetilde{\pi}})\widetilde{g}_1(t,z_1,p)+o(h)\bigg)^2 -[\widetilde{g}_1(t,z_1,p)]^2\nonumber \\ &= 
     	2h \widetilde{g}_1(t,z_1,p) (\partial_t+ \mathcal  A_1^{\widetilde{\pi}})\widetilde{g}_1(t,z_1,p) +o(h).
     	  \end{align}
     	 On the other hand, straightforward calculations yields 
     	  \begin{align}\label{eqa-20}
     	  	\lambda_0& \int^T_t H(\Pi^{h,\widetilde{\pi}}_s)ds -\lambda_0 \int^T_t H(\Pi^{*}_s)ds\nonumber\\
     	  	&=\lambda_0 \int^{t+h}_t \bigg(H(\widetilde{\pi}_s)- H(\Pi^*_s)\bigg)ds= \lambda_0 h(H(\widetilde{\pi}_t)-H(\Pi^*_t))+o(h).  \end{align}
   
   Combining (\ref{eqa-18}), (\ref{eqa-19}) and (\ref{eqa-20}), we derive 
   \begin{align}\label{eqa-21-1}
   	\widetilde{J}_1(t,z_1,p;\Pi^{h,\widetilde{\pi}},u_2^*)-\widetilde{J}_1(t,z_1,p;\Pi^{*},u_2^*)= h \Theta_1 +o(h),   \end{align}
   where $\Theta_1:= (\partial_t+ \mathcal  A_1^{\widetilde{\pi}})f^1(t,z_1,p)+\gamma_1 \widetilde{ g}_1(t,z_1,p)(\partial_t+ \mathcal  A_1^{\widetilde{\pi}})\widetilde{g}_1(t,z_1,p)+\lambda_0 (H(\widetilde{\pi_t})-H(\Pi^*)) $.
   
   From Step 2, we verify that $\widetilde{V}_1(t,z_1,p)=f^1(t,z_1,p)+ \mathbb{E}[\lambda_0 \int^T_t H(\Pi^*)ds]+\frac{\gamma_1}{2}(\widetilde{g}_1)^2(t,z_1,p). $ Moreover, since $\widetilde{V}_1$ satisfies the extended HJB equation (\ref{3-1}), we have 
   \begin{align}\label{eqa-21}
   \partial_t \widetilde{V}_1+ \mathcal{A}_1^{\widetilde{\pi}}\widetilde{V}_1+\gamma_1 \widetilde{g}_1 \mathcal A_1^{\widetilde{\pi}} \widetilde{g}_1-\frac{\gamma_1}{2}\mathcal A_1^{\widetilde{\pi}} (\widetilde{g}_1)^2+\lambda_0 H(\widetilde{\pi}) 	\leq 0.
      \end{align}
   Therefore, combining with (\ref{eqa-21}), we have 
   \begin{align}\label{eqa-23}
   	\Theta_1 &= (\partial_t+ \mathcal  A_1^{\widetilde{\pi}})\bigg[\widetilde{V}_1-\mathbb{E}[\lambda_0 \int^T_t H(\Pi^*_s)ds]-\frac{\gamma_1}{2}(\widetilde{g}_1)^2\bigg]+\gamma_1 \widetilde{g}_1\mathcal A_1^{\widetilde{\pi}}\widetilde{g}_1+\lambda_0 (H(\widetilde{\pi_t})-H(\Pi^*_t)) \nonumber  \\
   	&= \mathcal A_1^{\widetilde{\pi}} \widetilde{V}_1-\frac{\gamma_1}{2}\mathcal A_1^{\widetilde{\pi}}(\widetilde{g}_1)^2 +\gamma_1 \widetilde{g}_1\mathcal A_1^{\widetilde{\pi}}\widetilde{g}_1+\lambda_0 (H(\widetilde{\pi_t})-H(\Pi^*_t)) \nonumber\\
   	& \leq   -\lambda_0 H(\Pi^*_t)-\mathcal A^{\widetilde{\pi}}_1 	\mathbb{E}\bigg[\lambda_0 \int^T_t H(\Pi^*_s)ds\bigg]=0,   	\end{align}
   where the first equality is the case because (\ref{eqa-21}) and the second inequity is due to $H(\Pi^*_s)=\frac{1}{2}\log(\frac{2\pi \lambda_0}{\gamma_1\sigma^2 \chi^2})+\frac{1}{2}$, which is a constant.
   
   Finally, from (\ref{eqa-21-1}) and (\ref{eqa-23}), we conclude that for any $(t,z_1,p)\in[0,T] \times \mathbb R\times (0,1)$, and $\widetilde{\pi} \in \mathcal A_1$,
   \begin{align*}
   	\limsup_{h \downarrow 0} \frac{\widetilde{ J}_1(t,\boldsymbol{x},p;\Pi^{h,\widetilde{ \pi}},u_2^* )- \widetilde{ J}_1(t,\boldsymbol{x},p;\Pi^{*}, u_2^*)}{h} \leq 0,
   	   \end{align*}
   which indicates that $\Pi^*$ is an equilibrium policy.

\end{proof}

    		\subsection{Definition of essential infimum}\label{essential}

\begin{definition}[Appendix A in \cite{karatzas1998methods}]
	Let $\mathcal X$ be a nonempty family of nonnegative random variables defined on a probability space $(\Omega,\mathcal F, \mathbb P)$. The essential infimum of $\mathcal X$, denoted by $\essinf \mathcal X$, is a random variable $X^*$ satisfying the following:
	\begin{itemize}
		\item for all $X\in \mathcal X, X^*\leq X, \mathbb P$-a.s.; and
		\item if $Y$ is a random variable such that $Y\leq X$ for all $X\in \mathcal X$, then $Y\leq X^*$, $\mathbb P$-a.s.
	\end{itemize}
\end{definition}

\bibliographystyle{abbrvnat}
\bibliography{control}

\end{document}